\documentclass[a4paper,UKenglish]{lipics-v2021}

\title{A Formalization of Abstract Rewriting in Agda}

\author{Samuel Arkle}{Appalachian State University, USA}{arklesd@appstate.edu}{}{}
\author{Andrew Polonsky}{Appalachian State University, USA}{polonskya@appstate.edu}{}{}

\authorrunning{S. Arkle and A. Polonsky}
\titlerunning{A Formalization of Abstract Rewriting in Agda}
\Copyright{Samuel Arkle and Andrew Polonsky}

\keywords{Abstract rewriting, Formalization, Confluence, Strong normalization, Newman's Lemma, Well-foundedness}
\ccsdesc[300]{Theory of computation~Type theory}

\usepackage{hyperref}
\usepackage{amsmath}
\usepackage{amsthm}
\usepackage{pifont}
\usepackage{mathrsfs}

\usepackage{amssymb}
\usepackage{xcolor}
\usepackage{colortbl}
\usepackage{tikz-cd}
\usepackage{caption}
\usepackage{newunicodechar}

\usepackage[utf8]{inputenc}
\usepackage{ucs}

\usetikzlibrary{positioning,shapes.geometric,fit,arrows.meta}

\usepackage{multicol} 
\usepackage{adjustbox} 

\usepackage{makecell}

\usepackage{booktabs}

\usepackage{placeins}

\definecolor{darkgreen}{rgb}{0.0, 0.5, 0.0}

\newcommand{\sse}{\subseteq}
\newcommand{\bset}{\mathbf{Set}}
\newcommand{\nat}{\mathbb{N}}

\newcommand{\RP}{\mathrm{RP}}
\newcommand{\gRP}{\mathbf{RP}}
\newcommand{\RPm}{\mathrm{RP^{-}}}
\newcommand{\gRPm}{\mathbf{RP^{-}}}
\newcommand{\NF}{\mathrm{NF}}
\newcommand{\MF}{\mathrm{MF}}
\newcommand{\UN}{\mathrm{UN}^{=}}
\newcommand{\gUN}{\mathbf{UN}^{=}}
\newcommand{\UNto}{\mathrm{UN}^{\to}}
\newcommand{\gUNto}{\mathbf{UN}^{\to}}
\newcommand{\SN}{\mathrm{SN}}
\newcommand{\gSN}{\mathbf{SN}}

\newcommand{\SM}{\mathrm{SM}}
\newcommand{\SMseq}{\mathrm{SMseq}}
\newcommand{\gSMseq}{\mathbf{SMseq}}
\newcommand{\gSM}{\mathbf{SM}}
\newcommand{\WN}{\mathrm{WN}}
\newcommand{\gWN}{\mathbf{WN}}
\newcommand{\SMandWN}{\mathrm{SM\land WN}}
\newcommand{\gSMandWN}{\mathbf{SM\land WN}}
\newcommand{\WM}{\mathrm{WM}}
\newcommand{\gWM}{\mathbf{WM}}

\newcommand{\NP}{\mathrm{NP}^{\to}}
\newcommand{\gNP}{\mathbf{NP}^{\to}}
\newcommand{\NPe}{\mathrm{NP}^{=}}
\newcommand{\gNPe}{\mathbf{NP}^{=}}

\newcommand{\MP}{\mathrm{MP}}
\newcommand{\gMP}{\mathbf{MP}}
\newcommand{\CR}{\mathrm{CR}}
\newcommand{\CRs}{\mathrm{CR^{\le 1}}}
\newcommand{\gCRs}{\mathbf{CR^{\le 1}}}
\newcommand{\gCR}{\mathbf{CR}}
\newcommand{\WCR}{\mathrm{WCR}}
\newcommand{\gWCR}{\mathbf{WCR}}
\newcommand{\Inc}{\mathrm{Inc}}
\newcommand{\gInc}{\mathbf{Inc}}
\newcommand{\gBP}{\mathbf{BP}}
\newcommand{\FB}{\mathbf{FB}}

\newcommand{\gCP}{\mathbf{CP}}

\newcommand{\rstep}{\mathbin{\longrightarrow_R}}
\newcommand{\mstep}{\mathbin{\longrightarrow_R^*}}
\newcommand{\estep}{\mathbin{\longrightarrow_R^=}}
\newcommand{\rrstep}{\mathbin{\longrightarrow_R^r}}
\newcommand{\brstep}{\mathbin{\longleftarrow_R^r}}
\newcommand{\bstep}{\mathbin{\longleftarrow_R}}
\newcommand{\bmstep}{\mathbin{\longleftarrow_R^*}}
\newcommand{\nrstep}{\mathbin{\,\,\not \!\!\!\! \rstep}}

\newunicodechar{∀}{\ensuremath{\forall}}
\newunicodechar{→}{\ensuremath{\rightarrow}}
\newunicodechar{ℕ}{\ensuremath{\mathbb{N}}}
\newunicodechar{↔}{\ensuremath{\leftrightarrow}}
\newunicodechar{⊆}{\ensuremath{\sse}}
\newunicodechar{∧}{\ensuremath{\land}}
\newunicodechar{₌}{\ensuremath{_=}}
\newunicodechar{𝓟}{\ensuremath{\mathcal{P}}}
\newunicodechar{∈}{\ensuremath{\in}}
\newunicodechar{φ}{\ensuremath{\phi}}
\newunicodechar{Σ}{\ensuremath{\Sigma}}
\newunicodechar{₁}{\ensuremath{{}_1}}

\newcommand{\terese}{Terese~\cite{Terese}}
\newcommand{\ule}[1]{\underline{#1:}}

\newcommand{\bemph}[1]{\textbf{\emph{#1}}}

\newcommand{\tclos}[1]{{#1}^{\scriptscriptstyle{+}}}

\newcommand{\isWFseqm}{\mathrm{isWFseq}{-}}

\newcommand{\WFind}{\mathbf{WFind}}
\newcommand{\WFacc}{\mathbf{WFacc}}
\newcommand{\WFseq}{\mathbf{WFseq}}
\newcommand{\WFmin}{\mathbf{WFmin}}
\newcommand{\WFminDNE}{\mathbf{WFminDNE}}

\newcommand{\WFcor}{\mathbf{WFcor}}
\newcommand{\WFcorm}{\mathbf{WFcor}\boldsymbol{\lnot\lnot}}
\newcommand{\WFaccm}{\mathbf{WFacc}\boldsymbol{\lnot\lnot}}

\mathchardef\mhyphen="2D
\newcommand{\WFseqm}{\mathbf{WFseq}{\boldsymbol{\mhyphen}\!\boldsymbol{\mhyphen}}}
\newcommand{\WFminm}{\mathbf{WFmin}\boldsymbol{\lnot\lnot}}
\newcommand{\WFminDNEm}{\mathbf{WFminDNE}\boldsymbol{\lnot\lnot}}

\newcommand{\accCor}{\mathbf{accCor}}
\newcommand{\accDNE}{\mathbf{accDNE}}
\newcommand{\mpe}{\mathbf{MP}{\boldsymbol{\equiv}}}
\newcommand{\rdec}{\mathbf{Rdec}}
\newcommand{\corDNE}{\mathbf{corDNE}}

\newcommand{\ol}[1]{\overline{#1}}

\newcommand{\bq}{\ensuremath{\bullet}\quad}        
 
\newtheorem{notation}{Notation} 

\nolinenumbers 

\begin{document}

\maketitle
\begin{abstract}
    We present a constructive formalization of Abstract Rewriting Systems (ARS)
    in the Agda proof assistant, focusing on standard results in term rewriting. We define a taxonomy of
concepts related to termination and confluence and investigate the
relationships between them and their classical counterparts. We identify,
and eliminate where possible, the use of classical logic in the proofs of
standard ARS results. Our analysis leads to refinements and mild
generalizations of classical termination and confluence criteria. We
investigate logical relationships between several notions of termination,
arising from different formulations of the concept of a well-founded relation. 
We illustrate general applicability of our ARS development with an example formalization of the lambda calculus. 
\end{abstract}

\section{Introduction}

\newcommand{\inreview}[1]{#1}

\label{sec:Introduction}

%

We present an Agda formalization of the basic Abstract Rewriting Systems (ARS) theory as presented in the book by Bezem, Klop and De Vrijer,
henceforth referred to as \terese. 
This work is part of a larger effort to develop a library of formalized programming language theory (PL)
at \inreview{Appalachian State} University.
Since Term Rewriting Systems (TRS) play a foundational role in programming languages,
while ARS encompass the basic facts about rewriting relations in general,
establishing these facts is part of the necessary infrastructure required to
pursue the broader project.  The present contribution can therefore be seen as
``bootstrapping'' \inreview{Appalachian State}'s new formalized PL repository.
\cite{FPLA}

This broader goal also drives the main design choices in our approach.
Since rewriting theory is concerned with a fine-grained analysis of computation,
the most natural vehicle for formalizing it is type theory based on the
Curry--Howard isomorphism, which Agda implements.  Proofs in this language
are automatically effective, and implicitly carry the code that implements
a transformation from the hypotheses to the conclusion of the claim.  For example,
a type-checked proof that a given TRS is Church--Rosser automatically renders a function that
computes a common reduct between any two convertible terms.
Proofs in Agda are thus not mere verifications of validity, but are also implementations
of conceptual algorithms implicit in constructive proofs.
Being able to perform such computations as needed
facilitates the development of complex ontologies and theories.
This makes Agda a particularly attractive vehicle for our bigger project.

Staying within this paradigm requires the proofs to be constructive.
Moreover, it requires us to avoid any axioms or postulates, such as function
extensionality, uniqueness of identity proofs, univalence, etc.
While most of ARS results are indeed constructive, standard presentations,
 including~\cite{Terese}, make liberal use of classical logic.
Our paper can thus be read as a thoroughly constructive
development of elementary ARS theory.

\subsection{Motivation}

Before we proceed further, let us consider two formalization scenarios where
a constructive ARS library can be useful.

\begin{example}[Formalization of a typed lambda calculus]
Suppose one is formalizing the standard metatheory of
a typed lambda calculus.  To show confluence, one can
first establish the Church--Rosser theorem (CR) for the untyped lambda calculus,
and then proceed via the Subject Reduction theorem.
However, suppose that Strong Normalization had already been verified independently.
Invoking Newman's lemma, one can conclude confluence from the weak
diamond property --- which is generally easier to prove than full CR.
\end{example}

\begin{example}[Quotients of types]
Suppose one wishes to construct the initial model of an algebraic theory $T$
over a first-order signature $S$.
(For example, the reader could think here of the free commutative semiring over a finite set of generators.)
The classical definition involves the quotient of the term algebra $S^*$ by
the equivalence relation $\approx_T$ induced by $T$.
Direct encoding of quotients is problematic in pure type theory.
While there are a number of solutions, such as Higher Inductive Types~\cite{HoTT},
Quotient Inductive--Inductive Types~\cite{QIIT}, Cubical Type Theory~\cite{CTT}, etc.,
they involve extending type theory with new language constructs,
incurring additional costs in complexity.

Now, suppose $T$ admits a convergent presentation.
That is, suppose that the equivalence relation on $S^*$ coincides
with conversion generated by a confluent and terminating rewriting system
$\to_R : S^* \to S^* \to \bset$.
Then one can replace $S^*/{\approx_T}$ by the type of $R$-normal forms.
(In the semiring example, these would be polynomial expressions sorted according to
an appropriate ordering on monomials.)
This can be defined as a pure dependent type
\[ NF(R) \quad := \quad (\Sigma n : S^*) (\Pi x : S^*) \lnot (n \to_R x) \]
This type encodes the quotient in the sense that the type of functions $S^*/{\approx_T} \to A$ is isomorphic to plain functions from $NF(R)$ to $A$.
Indeed, the elimination rule for the quotient type requires every equation in $T$
to be validated in order to define a function $f : S^*/{\approx_T} \to A$.
Any such function trivially specializes to $NF(R)$.  In the other direction,
a plain function from $NF(R)$ to $A$
can be extended to all of $S^*$ by induction on normalizing reduction sequences.
Confluence ensures that $T$-equal elements are always mapped to the same value.

It follows that $S^*/{\approx_T}$ is isomorphic to $NF(R)$.

It also follows as a corollary that equality in the initial model is decidable.
\end{example}

As these examples illustrate, a fully effective implementation of basic ARS results
can be helpful in many settings arising in programming language theory.

\subsection{Formalization principles}

Agda is both a programming language and an interactive theorem prover.
As a programming language, it can be thought of as an extension of Haskell
with dependent types, which enormously expands the expressivity of the type system,
allowing complex specifications to be embedded directly into the
very types of the functions being written.
As a theorem prover, Agda is an implementation of Martin-L\"of
Intuitionistic Type Theory, which strictly adheres to the proofs-as-programs paradigm.
As with other proof assistants in this family, its logic is based on encoding
formulas as types in an extended typed lambda calculus, with the underlying terms
playing the role of proofs of said formulas. Agda enjoys the usual metatheoretic characteristics
of typed lambda calculi, including subject reduction, strong normalization, and canonicity.

In a quite literal sense, Agda proofs are dependently-typed functional programs.
Proofs by structural induction over recursive types or derivation trees are represented by
recursive functions defined by pattern-matching, so familiar to functional programmers.
The dependent type system enables a unique interactive programming style, where
writing code is often reduced to constructing a value of a particular type in a given context. As long as the code does not use classical logic (the principle of Excluded Middle) or
other axioms (such as Function Extensionality), the proofs, being programs,
are guaranteed to compute. 

With the explicit goal of having the results of basic abstract rewriting
formalized in a canonical way, so that they can be used in building bigger
libraries of formalized PL theory, we adopted the following
principles during our development.
\begin{itemize}
  \item No use of function extensionality;
  \item No use of univalence, uniqueness of identity proofs, axiom K, or any other
  assumptions related to equality;
  \item Minimize the use of classical logic, and make decidability hypotheses explicit in every place where classical logic could not be avoided;
  \item Stay faithful to the Curry--Howard paradigm, so that our proofs would compute
    explicit witnesses in every concrete instance.
\end{itemize}

\subsection{Decidability}
In several places, we had to assume certain decidability hypotheses in order to make
the proofs go through.  These were always flagged explicitly, and an effort was made to
have as few of them as possible.

Recall that a relation is strongly normalizing iff its converse is well-founded.~\cite[Def.1.1.13]{Terese}.
Since normalization is a central concept in ARS theory,
we pay special attention to the concept of well-foundedness.
The standard constructive definition does not even allow one to show that strong normalization implies weak normalization (i.e., that every object has at least one finite reduction to a normal form),
as we discuss at the end of Section \ref{sec:Well-foundedness} and formalize in \texttt{ARS/Examples.agda}. We therefore looked at a number of variations of this notion in the constructive context,
their classical counterparts all being logically equivalent. 

We have also identified classes of ARSs where the needed classical principles are simply valid.  In particular, for finitely branching relations, the implication SN to WN requires nothing else beyond plain decidability of the relation itself: $Rxy \lor \lnot Rxy$.  For example, an ARS induced by a first-order TRS with a finite set of rules is both finitely branching and decidable.

The main conclusion of our work is that most of the main ARS results can be made completely effective, at least for the practical examples encountered in first-order TRSs and lambda calculi.
For more exotic rewrite systems --- like $\lambda \bot$ (underpinning sensible lambda theories), or coinductive rewriting ---
these decidability assumptions no longer hold, and the utility of the ARS framework diminishes proportionally.

\subsection{Contributions and related works}
Several formalizations of the concepts of elementary ARS theory exist, including projects in Isabelle~\cite{isabelleFormalization}, PVS~\cite{trs}, and Rocq~\cite{coqARS} as well as the Agda standard library~\cite{AgdaLibraryRewriting}.
The first three are quite mature large-scale projects for certifying termination of
general rewrite systems.  Our work is more comparable in scale to the rewriting development in the
Agda library, and is meant to serve as a starting point for a new library of formalized
programming language theory.  The specific contributions we provide beyond the works cited above
are as follows.
\begin{itemize} \label{list:ourContributions}
  \item An original formalization of elementary ARS theory as presented in Terese
 ~\cite{Terese}, including relevant background about relations and logic;
  \item An ontology of termination and confluence properties, and a detailed analysis of logical relationships between them;
  \item Definitions of new ARS concepts that refine our understanding of these relationships;
  \item Marginal improvements to classical confluence and termination criteria;
  \item An examination of several distinct notions of well-foundedness
  in the constructive setting.
\end{itemize}


\subsection{Plan of the paper}

The structure of the paper roughly corresponds with the progression of the Agda code.
In the next section, we lay out the main properties of relations studied in ARS theory,
setting the ground for what follows.  In Section \ref{sec:Formalization}, we
outline our formalization of the ARS chapter from~\cite{Terese}.
Here we also discuss how this effort suggested new ARS properties that helped us overcome some obstacles we encountered along the way.  These properties fit naturally within the rewriting paradigm.  Thus in Section \ref{sec:Hierarchy}, we dedicate some effort to investigating them further.  The result is a broader ontology of ARS concepts, revealing interrelationships between conditions for completeness (i.e., the property of being both strongly normalizing and Church-Rosser).  In Section \ref{sec:Well-foundedness} we focus on what it means for a relation to be well-founded, identifying the classical principles needed to transition between different formulations.
In Section \ref{sec:Applications}, we illustrate how the library is used in the greater project by the case of untyped lambda calculus.
We conclude with a brief summary and suggestions for future work.

In this paper, known results are stated with no proofs at all, save the reference to the Agda code. \cite{FPLA}
New but elementary propositions are usually given a one-sentence informal justification.
When our development does uncover new ideas or connections,
these are usually discussed either in a dedicated remark or in ambient text.
These choices reflect the nature of our contribution, and its aim of laying the
groundwork for bigger projects.

\section{Definitions}
\label{sec:Definitions}
The following definitions are provided to ensure clarity and precision
in our discussion. Each subsection specifies the Agda file(s) which contains formalizations of the given definitions.

\begin{definition}
    An \bemph{abstract rewrite system ({ARS})} is a structure $\mathcal{A} = (A, R_\alpha)$ where
     $A$ is a set and $R_\alpha$ is a set of binary relations on $A$.
\end{definition}

For the rest of this paper, let $A$ be a fixed set, $R$ be a relation on $A$.

For $a,b \in A$,\ $a \equiv b$ denotes strict equality, encoded in Agda by the usual identity type.


\subsection{Closure operators (\texttt{Relations/ClosureOperators.agda})}

\begin{notation}
The following notations are standard.
  \begin{enumerate}
    \item $R^r$ denotes the \bemph{reflexive closure} of $R$.
    \item $R^s$ denotes the \bemph{symmetric closure} of $R$.
    \item $\tclos{R}$ denotes the \bemph{transitive closure} of $R$.
    \item $R^* = (R^r)\tclos{}$ denotes the \bemph{reflexive--transitive closure} of $R$.
    \item $R^= = (R^s)^*$ denotes the \bemph{equivalence relation generated by} $R$.
  \end{enumerate}
\end{notation}

In the context of abstract rewriting, we will often write $a \rstep b$
in place of $(a,b) \in R$.  We will also write $a \mstep b$ for $(a,b) \in R^*$,
$a \bstep b$ for $(b,a) \in R$, and $a \nrstep b$ for $(a,b)\notin R$.  When the relation $R$ is clear from the context,
we may drop the corresponding subscript.


\subsection{Normalization and Confluence (\texttt{ARS/Properties.agda})}

We define rewriting-theoretic concepts such as confluence and normalization as predicates on $A$, specifying when a given element $a \in A$ satisfies a property locally. Universal quantification over $a$ then yields the corresponding global property of the relation $R$

\begin{definition} Let $a \in A$.
  \begin{description}
    \item[\bq \ule{${a \in \NF_R}$}] $a$ is a \bemph{normal form} if $a \nrstep b$ for any $b \in A$ .
    \item[\bq \ule{$a \in \WN_R$}] $a$ is \bemph{weakly normalizing} if $a \mstep b$ for some $b \in \NF_R$.
    \item[\bq \ule{$a \in \SN_R$}] $a$ is \bemph{strongly normalizing} if $a$ is accessible with respect to the converse 
      of $R$, see Section \ref{sec:Well-foundedness}.
      That is, $a \in \SN_R$ if for each $b \in A$, $a \rstep b$ implies $b \in \SN_R$.
        This is an inductive definition, with the base case obtained at
     those $a$ satisfying $a \in \NF_R$.

  \end{description}
\end{definition}

\begin{definition} Let $a \in A$.
    \begin{description}
        \item[\bq \ule{$a \in \WCR_R$}] $a$ is \bemph{weakly Church-Rosser} (or \emph{weakly confluent}) if for all $b,c \in A$, \[c \bstep a \rstep b \implies \exists d.\, c \mstep d \bmstep b\]
        \item[\bq \ule{$a \in \CR_R$}] $a$ is \bemph{Church-Rosser} (or \emph{confluent}) if for all $b,c \in A$, \[c \bmstep a \mstep b \implies \exists d.\, c \mstep d \bmstep b\]
        \item[\bq \ule{$a \in \CRs_R$}] $a$ is \bemph{subcommutative} if for all $b, c \in A$,
        \[c \bmstep a \mstep b \implies \exists d. \, c \rrstep d \brstep b\]
        \item[\bq \ule{$a \in \NPe_R$}] $a$ has the \bemph{conversion normal form property} if for all $b \in \NF_R$, \[ a \estep b \implies a\mstep b\]
        \item[\bq \ule{$a \in \NP_R$}] $a$ has the \bemph{reduction normal form property} if for all $b \in \NF_R$,
        \[c \bmstep a \mstep b \implies c \mstep b\]
        \item[\bq \ule{$a \in \UN_R$}] $a$ has the \bemph{conversion unique normal form property} if for $a, b \in \NF_R$, \[a \estep b \implies a \equiv b\]
        \item[\bq \ule{$a \in \UNto_R$}] $a$ has the \bemph{reduction unique normal form property} if for all $b, c \in \NF_R$,
        \[c \bmstep a \mstep b  \implies b \equiv c\]
    \end{description}
\end{definition}

\begin{notation}
    For each of the predicates $\mathrm{X_R}$ defined above we write
    $\mathbf{X}(R)$ if $\forall a.\; a \in \mathrm{X}_R$. When $R$ is implied by the context, we will write $\gNP$ for $\gNP(R)$, etc.

    That is, for $\mathrm{X} \in \{\WCR , \CR , \NP , \UN, \dots \}$, $\mathbf{X}$ denotes the statement $\forall a. a \in \mathrm{X}_R$.
\end{notation}

In \terese, $\NF$ is used to denote the normal form property $\NPe$, whereas we have chosen to use $\NF$ to denote the set of normal forms.

\begin{proposition}\hfill
    \begin{enumerate}
        \item $\gNPe \iff \gNP$
        \item $\gUN \, \implies \gUNto$
    \end{enumerate}
\end{proposition}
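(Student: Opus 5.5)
Both parts are elementary and I would prove them directly from the definitions, without classical reasoning. I would first record three facts about closure operators. (i) $R^* \subseteq R^=$. (ii) $R^=$ is symmetric, so $c \bmstep a$ gives $c \estep a$ by inverting each step. (iii) $R^=$ is transitive. All three are proved by induction on the derivation of the closure in \texttt{Relations/ClosureOperators.agda}.

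For (1), direction $\gNPe \implies \gNP$: fix $a$, $b \in \NF_R$ and $c$ with $c \bmstep a \mstep b$. By (i)–(iii) we get $c \estep a \estep b$, hence $c \estep b$. Applying $\gNPe$ at the point $c$ (the global hypothesis lets us instantiate at $c$ rather than at $a$) yields $c \mstep b$. This gives $a \in \NP_R$.

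For the converse, $\gNP \implies \gNPe$: fix $a$ and $b \in \NF_R$ with $a \estep b$. I would induct on the conversion $a \estep b$, viewed as a finite chain of forward and backward $R$-steps, and show $a \mstep b$.
\begin{itemize}
\item If the chain is empty, then $a \equiv b$ and reflexivity suffices.
\item If the chain starts with a step $a \rstep a'$ and $a' \estep b$, then the induction hypothesis gives $a' \mstep b$, so $a \mstep b$.
\item If the chain starts with a backward step $a \bstep a'$ and $a' \estep b$, then the induction hypothesis gives $a' \mstep b$. Now $a \bmstep a' \mstep b$ with $b \in \NF_R$, so $\NP_R$ at the point $a'$ gives $a \mstep b$.
\end{itemize}
The induction must be arranged so that $b$ stays fixed and the chain is peeled from the left. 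In Agda this may require first converting the representation $(R^s)^*$ into a left-inductive form, and that bookkeeping is the only real obstacle I expect.

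For (2), fix $a$ and $b, c \in \NF_R$ with $c \bmstep a \mstep b$. As in part (1), (i)–(iii) give $c \estep b$. Applying $\gUN$ to the normal forms $c$ and $b$ then yields $c \equiv b$, hence $b \equiv c$ by symmetry of $\equiv$.
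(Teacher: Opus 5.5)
Your proof is correct and is the same direct argument the paper relies on (the paper gives no written proof beyond its Agda lemmas): turn the peak $c \bmstep a \mstep b$ into a conversion $c \estep b$ and apply the global hypothesis at $c$, and for $\gNP \implies \gNPe$ induct on the conversion, applying $\NP$ at the source of each backward step. If peeling $(R^s)^*$ from the left is awkward, you can skip the conversion to a left-inductive form by proving that every conversion $a \estep m$ satisfies ``$m \mstep b \Rightarrow a \mstep b$ for all $b \in \NF_R$''; this invariant is directly closed under reflexivity, forward and backward steps, and transitivity.
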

  All of the proofs in this work have been fully formalized in Agda, and many of them require no further elaboration.  Throughout the paper, we refer to the relevant locations in the code base by giving the name of the Agda file, along with the names and types of the functions corresponding to each claim.
\newpage
The proof of the above proposition is formalized in \texttt{ARS/Implications.agda} as follows.
    \begin{enumerate}
        \item \verb|NP₌↔NP : R isNP₌ ↔ R isNP|
        \item \verb|UN→|\texttt{UN}$^{\to}$\verb|: R isUN → R is|\texttt{UN}$^{\to}$ \qedhere
    \end{enumerate}

\begin{remark}
    The implication $\UNto \implies \UN$ does not hold, as can be seen in Counterexample~\ref{CE:5} in Figure~\ref{fig:counterexamples}.
\end{remark}

\subsection{Sequences and Recurrence (\texttt{Relations/Seq.agda}, \texttt{ARS/Properties.agda})}


\begin{definition} \label{def:seq} \hfill
  \begin{description}
    \item \bq A \bemph{sequence} is a function $s$ from the natural numbers to $A$.
    \item \bq A sequence $s : \nat \to A$ is \bemph{$R$-decreasing} if every element of the sequence is $R$-related to its preceding term: $\forall k. s(k+1) \rstep s(k)$.
    \item \bq A sequence $s : \nat \to A$ is \bemph{$R$-increasing} if every element of the sequence is $R$-related to its succeeding term: $\forall k. s(k) \rstep s(k+1)$.
  \end{description}
\end{definition}

\begin{definition}\hfill
    \begin{description}
        \item[\bq \ule{$x \in CP_R$}] $x$ has the \bemph{cofinality property} if there exists an $R^r$-increasing sequence
        \mbox{$s : \nat {\to} A$} starting from $x$ such that every
        reduct of $x$ reduces to an element of the sequence.
        \[ x \mstep y \implies \exists i. y \mstep s(i) \]
    \end{description}
\end{definition}
Note the use of reflexive closure $R^r$ in the definition above. This 
allows the reduction sequence $s(i)$ to be empty from some point on, which is needed to make this formulation equivalent to the one in Terese.
(In particular, normalizing $x$ have the cofinality property.)
\begin{definition} \label{def:bound} \hfill
  \begin{itemize}
    \item [] \bq $a \in A$ is an \bemph{$s$-bound} if for all $i \in \nat$, $s (i) \mstep a$.
    \item [] \bq $s : \nat \to A$ is \bemph{bounded} if there exists $a \in A$ that is an $s$-bound.
  \end{itemize}
\end{definition}

\begin{definition} \label{def:rp}
  The following are global properties of the given relation $R$.  \hfill
    \begin{description}
        \item[\bq \ule{$\gBP(R)$}] $R$ has the \bemph{boundedness property} if every $R$-increasing sequence is bounded.
        \item[\bq \ule{$\gRP(R)$}] $R$ has the \bemph{recurrence property} if for every bounded, increasing sequence $s$ there exists an $i \in \nat$ such that
        $s (i) \mstep x \implies x \mstep s (i)$ for all $x\in A$.
        \item[\bq \ule{$\gRPm(R)$}] $R$ has the \bemph{weak recurrence property} if for every bounded, increasing sequence $s$ with bound $b$ there exists an $i \in \nat$ such that $b \mstep s (i)$.
    \end{description}
\end{definition}

The property $\gBP$ is equivalent to the property called $\mathbf{Ind}$ in \terese. More precisely, $\mathbf{Ind}$ asserts the existence of a bound for finite sequences as well, but that is automatic, since the last term of a finite reduction sequence is a bound for that sequence.

\terese $\,$ also defines the following property, which is not included in our formalization.

\begin{definition} \hfill
    \begin{description}
        \item[\bq  \ule{$\gInc(R)$}] $R$ is \bemph{increasing} if there exists a ``size function'' $|{-}| : A \to \nat$ satisfying, for all $a, b \in A$,
        $(a \rstep b) \implies |a| < |b|$.
    \end{description}
\end{definition}

Having a size function that maps all of $A$ to the natural numbers can be computationally taxing to satisfy.
Furthermore, it takes us outside of a purely relational theory. The weaker property $\gRP$ defined above will substitute $\gInc$ in all places that require it. For further discussion, see Subsection~\ref{subsec:def}.

\newpage
\begin{proposition}\hfill
    \begin{enumerate}
        \item $\gInc \implies \gRP$
        \item $\gRP \iff \gRPm$
    \end{enumerate}
\end{proposition}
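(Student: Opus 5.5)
For part (1), the plan is to show that under $\gInc$ there are \emph{no} bounded $R$-increasing sequences at all, so $\gRP$ holds vacuously. Fix a size function $|{-}|$ with $a \rstep b \implies |a| < |b|$, and let $s$ be an $R$-increasing sequence with bound $b$. First, by induction on the length of a reduction, $a \mstep c$ implies $|a| \le |c|$. Second, by induction on $k$, $|s(k)| \ge k$, using that $s(k) \rstep s(k+1)$ gives $|s(k)| < |s(k+1)|$. Now take $i = |b|+1$. Since $b$ is an $s$-bound, $s(i) \mstep b$, so the first fact gives $|s(i)| \le |b|$. The second fact gives $|s(i)| \ge |b|+1$. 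This is a contradiction. Because $<$ on $\nat$ is decidable, the contradiction is a concrete proof of $\bot$, and \emph{ex falso} then yields the required index. No classical principle is used.

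For the direction $\gRP \implies \gRPm$ in part (2), I would proceed directly. Given a bounded increasing $s$ with bound $b$, $\gRP$ supplies an $i$ with $s(i) \mstep x \implies x \mstep s(i)$ for all $x$. Instantiate at $x = b$: since $b$ is an $s$-bound, $s(i) \mstep b$, hence $b \mstep s(i)$, and the same $i$ witnesses $\gRPm$.

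For the converse $\gRPm \implies \gRP$, the natural attempt is as follows. From $b \mstep s(i)$ we obtain $s(j) \mstep b \mstep s(i)$ for every $j$, so the tail of $s$ from $i$ onward lies in one $\mstep$-cycle. Given $s(i) \mstep x$, we must then produce $x \mstep s(i)$. I would try to build a new increasing sequence that passes through $x$ and is bounded, and then apply $\gRPm$ to it. For example, follow $s$ up to $i$, then a step-by-step path from $s(i)$ to $x$, and then continue. This is the step I expect to be the real obstacle, and I am doubtful it goes through as literally stated. Such a sequence needs a bound, and nothing relates $x$ to $b$.

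Indeed, the ARS $a \to a'$, $a' \to a$, $a \to c$ with $c$ normal appears to satisfy $\gRPm$ but not $\gRP$. The only infinite increasing sequences alternate between $a$ and $a'$, and every such sequence is bounded by both $a$ and $a'$, each of which reduces back into the sequence. Yet $a \mstep c$ while $c \not\mstep a$. So before formalizing, I would first check the precise Agda formulation of $\gRP$. Plausibly its quantifier over $x$ is restricted, for instance to $s$-bounds, or to $x$ that themselves extend to bounded increasing sequences. With such a restriction, the instantiation argument above, run in both directions, gives the equivalence: any bound $x$ with $s(i) \mstep x$ reduces back by $\gRPm$ applied to $s$ with bound $x$.
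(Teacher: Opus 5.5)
Part (1) and the direction $\gRP \implies \gRPm$ are fine. Part (1) is exactly the paper's vacuity argument.

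The genuine gap is the converse $\gRPm \implies \gRP$. You do not prove it, and you suggest it is false as stated. It is in fact true, and your counterexample does not work. In the ARS $a \to a'$, $a' \to a$, $a \to c$, the normal form $c$ is itself a bound of the alternating sequence, since $a \mstep c$ and $a' \mstep c$. Applying $\gRPm$ to that sequence with bound $c$ would require $c \mstep a$ or $c \mstep a'$, which fails. So this ARS violates $\gRPm$ as well as $\gRP$. It gives no reason to suspect a restricted formulation in the Agda code.

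The missing observation is that no new sequence is needed, because every reduct of the chosen $s(i)$ is automatically an $s$-bound.
\begin{itemize}
\item Let $\gRPm$, applied to $s$ with bound $b$, give an index $i$ with $b \mstep s(i)$.
\item Suppose $s(i) \mstep x$. For every $j$ we have $s(j) \mstep b \mstep s(i) \mstep x$, so $x$ is a bound of the \emph{same} sequence $s$. Your worry that ``nothing relates $x$ to $b$'' overlooks the chain $b \mstep s(i) \mstep x$.
\item Since $\gRPm$ quantifies over all bounds, apply it to $s$ with bound $x$. This yields $i'$ with $x \mstep s(i')$.
\item Then $x \mstep s(i') \mstep b \mstep s(i)$.
\end{itemize}
Hence the same index $i$ witnesses $\gRP$, with no classical reasoning.

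Your closing ``restricted'' sketch is essentially this argument, with two corrections. First, the restriction to $s$-bounds is automatically satisfied for this $i$. Second, the final step must pass through $b$, so that you land at $s(i)$ rather than merely at $s(i')$.
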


\begin{proof} \hfill
    \begin{enumerate}
        \item $\gInc$ implies that there are no bounded $R$-increasing sequences,
        so $\gRP$ holds vacuously.
        \item See \verb|RP-↔RP : R isRP- ↔ R isRP| in \texttt{ARS/Implications.agda}. \qedhere
    \end{enumerate}
\end{proof}

\subsection{Finitely branching (\texttt{Relations/FinitelyBranching.agda})}

\begin{definition} \hfill
    \begin{description}
        \item[\bq \ule{$a \in \FB$}] $a$ is \bemph{finitely branching} if there are finitely many $b \in A$ such that $a \rstep b$.
    \end{description}
\end{definition}

Our Agda formulation of the above definition makes use of the List type constructor, which we import from the Agda Standard Library~\cite{AgdaLibraryList}.

\verb|FB a = Σ[ xs ∈ List A ] (∀ b → R a b → b ∈List xs)|


\section{Formalization of \terese}
\label{sec:Formalization}
This sections is a guide to our formalization of \terese $\,$ and explains the deviations we take from the text.

\subsection{Formalization of Definitions (\texttt{Base.agda})}\label{subsec:def}

As presented in the previous section, the file \texttt{Properties.agda} defines most standard termination and confluence properties. The file \texttt{Base.agda} introduces additional definitions which are required for the formalization of the propositions found in Chapter 1 of \terese.

The following is a summary of the differences in definitions we have used as mentioned in Section \ref{sec:Definitions}.
\begin{itemize}
    \item $\SN$ : We define this via the notion of accessibility rather than the non-existence of infinite reduction sequences. Both definitions articulate normalization as well-foundedness of the converse relation. However, different notions of well-foundedness are used in the two definitions. The differences are highlighted in Section~\ref{sec:Well-foundedness}.
    \item $\NPe$ : We use the equivalent $\NP$ definition.
    \item $\Inc$ : We use the more general property $\RP$.
\end{itemize}

As mentioned, the $\Inc$ property can impose a computationally demanding requirement. However, $\Inc$ is used only in Theorem 1.2.3
and the only crucial aspect of $\Inc$ needed for the proof is that no $R$-increasing
sequence can have an upper bound.
We establish that the weaker property $\RP$ (or its equivalent $\RPm$) is sufficient for replacing $\Inc$ in the theorem (see Subsection \ref{subsec:theorems}).
This led us to examine the condition on $s (i)$ appearing inside $\RP$ more closely. This condition provides a generalization of the notion of a normal form,
which we explore in detail in Section \ref{sec:Hierarchy}.


\subsection{Formalization of Theorems \texttt{Theorems.agda}} \label{subsec:theorems}

\begin{theorem}(Newman's Lemma)
  $\gSN \land \gWCR \implies \gCR$
\end{theorem}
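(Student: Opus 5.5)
We prove the local version $a \in \SN_R \land \gWCR \implies a \in \CR_R$ by induction on the accessibility proof $a \in \SN_R$; quantifying over $a$ then gives the theorem. Because $\SN$ is defined inductively via accessibility, this induction is just structural recursion on the accessibility witness, which is exactly what Agda accepts. No decidability assumptions and no classical reasoning are needed. The induction hypothesis says that every one-step reduct $a'$ with $a \rstep a'$ is in $\CR_R$.

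Suppose $c \bmstep a \mstep b$. We case-split on the two reduction sequences, represented as inductive $R^*$-derivations that we view as a first step followed by a tail.

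\begin{itemize}
  \item If either sequence is empty, the other endpoint is a common reduct. For instance, if $b \equiv a$ we take $d := c$.
  \item Otherwise write $a \rstep b_1 \mstep b$ and $a \rstep c_1 \mstep c$.
  \begin{enumerate}
    \item Apply $\gWCR$ at $a$ to obtain $d_1$ with $b_1 \mstep d_1 \bmstep c_1$.
    \item Apply the induction hypothesis at $b_1$ to the peak $b \bmstep b_1 \mstep d_1$. This gives $e$ with $b \mstep e$ and $d_1 \mstep e$.
    \item Apply the induction hypothesis at $c_1$ to the peak $c \bmstep c_1 \mstep e$. The right leg is the composite $c_1 \mstep d_1 \mstep e$, formed by transitivity of $R^*$. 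This gives $d$ with $c \mstep d$ and $e \mstep d$.
    \item Conclude $b \mstep e \mstep d \bmstep c$.
  \end{enumerate}
\end{itemize}

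The main obstacle is organisational rather than mathematical. The argument should use the accessibility-based induction directly, with no transitive-closure or multiset argument. For this we need two things: that $\SN_R$ is closed under one-step reducts, which is immediate from the definition, and a concatenation lemma for $R^*$ used to build the composite legs. To keep the recursion structural, we will state the induction with the predicate $P(x) := x \in \CR_R$ and recurse on the accessibility witness. Calls at $b_1$ and $c_1$ then use the sub-witnesses extracted from $a$'s witness, so termination checking is automatic.

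Alternatively, we could first prove a general $\WFind$-style eliminator for $\SN$ and instantiate it. The direct pattern-matching version is shorter and also yields the common-reduct function computationally.
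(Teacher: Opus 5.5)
Your proof is correct and matches the paper's approach: the paper formalizes the second, induction-based proof of Newman's Lemma from Terese, which (with $\SN$ defined via accessibility) is exactly your recursion on the accessibility witness, using $\gWCR$ at $a$ and the induction hypothesis at the one-step reducts $b_1$ and $c_1$. The one routine detail left implicit is that the paper defines $R^*$ as $(R^r)^{+}$, so you need a small lemma splitting such a derivation into either an identity or a first $R$-step followed by an $R^*$-tail.
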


\begin{proof}
  There are three proofs of Newman's Lemma given in \terese.
  The first and third of these proofs make use of classical reasoning. The second proof is amenable to a
  constructive formalization as shown in the following function in \texttt{ARS-NewmansLemma.agda}:

  \verb|NewmansLemma : R isSN → R isWCR → R isCR|.
\end{proof}

  We show in Subsection \ref{subsec:newnewman} a generalization of Newman's Lemma using a condition distilled from $\gRP$.
  The formalizations of the remaining theorems are all found in \texttt{Theorems.agda}.

\begin{theorem}[Thm 1.2.2 in \terese] \hfill 
\begin{enumerate} [i)]
  \item $\gCR \implies \gNPe \implies \gUN$
  \item $\gWN \land \gUN \implies \gCR$
  \item $\gCRs \implies \gCR$
\end{enumerate}
\end{theorem}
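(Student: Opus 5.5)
The plan is to prove each implication constructively by induction on the derivation of $R^=$ (viewed as $(R^s)^*$) or of $R^*$, with the argument kept as a composition of small lemmas on closure operators from \texttt{Relations/ClosureOperators.agda}.

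For i), $\gCR \implies \gNPe$: suppose $a \estep b$ with $b \in \NF_R$. I first prove the standard lemma that $\gCR$ implies every conversion $a \estep b$ has a common reduct, i.e.\ $\exists d.\ a \mstep d \bmstep b$. This goes by induction on the list of $R^s$-steps. The empty case is trivial. For a forward step $a \rstep a'$ with $a' \mstep d \bmstep b$, we get $a \mstep d$ directly. For a backward step $a \bstep a'$ with $a' \mstep d \bmstep b$, apply $\gCR$ at $a'$ to the peak $a \bmstep a' \mstep d$ to obtain $e$, and then $a \mstep e$ and $b \mstep d \mstep e$. Now, since $b \in \NF_R$, I need that $b \mstep d$ forces $d \equiv b$. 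This holds by case analysis on the first step: if it is reflexive we are done, and otherwise it contradicts normality, giving a value in $\bot$ that we eliminate. Hence $a \mstep b$.

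Next, $\gNPe \implies \gUN$: given $a, b \in \NF_R$ with $a \estep b$, $\gNPe$ gives $a \mstep b$, and the same normal-form lemma gives $a \equiv b$. Both steps are constructive.

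For ii), take a peak $c \bmstep a \mstep b$. By $\gWN$ choose normal forms $n_b, n_c$ with $b \mstep n_b$ and $c \mstep n_c$. Then $n_c \estep n_b$, via $n_c \bmstep c \bmstep a \mstep b \mstep n_b$, using the embeddings $R^* \subseteq R^=$, symmetry of $R^=$, and transitivity. By $\gUN$ we get $n_c \equiv n_b$; transporting along this equality, $d := n_b$ is a common reduct. No classical reasoning is needed, since the witnesses come from $\gWN$.

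For iii), I first read $\gCRs$ (following the Terese intent) as the local one-step condition $c \bstep a \rstep b \implies \exists d.\ c \rrstep d \brstep b$. I then show that it implies the full condition on $R^*$ by a tiling argument. The first lemma is a strip lemma: if $a \rstep b$ and $a \mstep c$, then $\exists d$ with $b \mstep d$ and $c \rrstep d$. This goes by induction on $a \mstep c$, tiling one step at a time, and the $R^r$ on the side of $c$ is what keeps the induction going. The second is induction on the other reduction, using the strip lemma at each step. If the paper's $\CRs$ is literally stated with $R^*$ hypotheses, then $\gCRs \implies \gCR$ is immediate, since $R^r \subseteq R^*$. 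The main obstacle is the bookkeeping in the strip lemma: the induction must be organized over the correct argument of $R^*$, with reflexive steps handled by case splitting, so that Agda's termination checker accepts it.
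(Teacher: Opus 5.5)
Your proposal is correct and follows the paper's decomposition. The paper gives no informal proof, only Agda functions for exactly $\gCR \Rightarrow \gNPe$, $\gNPe \Rightarrow \gUN$, ii and iii, and your arguments are the standard Terese ones; you are also right to flag that the printed definition of $\CRs$ quantifies over $R^*$-peaks, which would make iii trivial, whereas the intended one-step reading needs your strip-lemma tiling.

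One small refinement: the paper also proves the stronger \texttt{ii+}, namely $\gWN \land \gUNto \implies \gCR$. Your argument for ii already yields it, since $n_b$ and $n_c$ are both reducts of $a$, so $\UNto$ at $a$ identifies them without forming a conversion.
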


\begin{proof}
    See \texttt{module Theorem-1-2-2} in \texttt{Theorems.agda}, \\
    \verb|CR→NP : R isCR → R isNP₌|\\
    \verb|NP→UN : R isNP₌ → R isUN|\\
    \verb|ii     : R isWN × R isUN → R isCR|\\
    \verb|ii+    : R isWN × R |\texttt{isUN}$^{\to}$ \verb|→ R isCR|\\
    \verb|iii    : subcommutative R → R isCR|
\end{proof}
\begin{remark}
    In $\mathtt{UN^{→}∧WN→CR}$ a more general proof of ($ii$) is given using $\UNto$.
\end{remark}

\begin{theorem}[Thm. 1.2.3 in \terese] \hfill
  \begin{enumerate}[i)]
    \item $\gWN \land \gUN \implies \gBP$
    \item $\gBP \land \gInc \implies \gSN$
    \item $\gWCR \land \gWN \land \gInc \implies \gSN$
    \item $\gCR \iff \gCP$ $($the forward implication only holds if the ARS is countable.$)$
  \end{enumerate}
\end{theorem}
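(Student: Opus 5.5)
We treat the four items in turn, following \terese's arguments but checking constructivity at each step. In the formalization, $\gInc$ is replaced by $\gRP$ (equivalently $\gRPm$), which the previous proposition shows is weaker.

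For (i), let $s$ be an $R$-increasing sequence. By $\gWN$, $s(0) \mstep n$ for some $n \in \NF_R$, and we claim $n$ is an $s$-bound. For each $i$, $\gWN$ gives $s(i) \mstep n_i$ with $n_i \in \NF_R$. Since $s(0) \mstep s(i)$, we get $n \estep n_i$, and $\gUN$ yields $n \equiv n_i$. Transporting along this equality gives $s(i) \mstep n$. This argument is fully constructive.

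For (ii), with $\gRP$ in place of $\gInc$, we prove $\SN$ by showing that every element is accessible for the converse of $R$. The difficulty is that $\gBP$ and $\gRP$ only speak about infinite increasing sequences, while accessibility is inductive. The classical route says: if $a \notin \SN_R$, build an infinite sequence from $a$ staying outside $\SN_R$; by $\gBP$ it has a bound $b$; by $\gRPm$ some $s(i)$ satisfies $b \mstep s(i)$. Then $s(i) \mstep s(i+1) \mstep b \mstep s(i)$ is a cycle, and cycling through it gives a contradiction (with $\gInc$ this is immediate from the size function).

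Building the non-$\SN$ sequence requires dependent choice together with $\lnot \SN \Rightarrow \exists$ a non-$\SN$ successor, which is classical. This is the main obstacle. Our first attempt would be to derive $\gSN$ in its sequence form $\lnot\exists$ increasing sequence, observing that $\gBP \land \gRP$ make that refutation constructive. Any increasing $s$ yields a bound $b$ and an index $i$ with $b \mstep s(i)$. Then the sequence $s(i), s(i+1), \dots$ is bounded by $s(i)$ itself, and its successive tails are all mutually convertible via reductions. To conclude, we additionally need that no element reduces to itself in $\geq 1$ step; we would extract this from $\gRP$ applied to the constant-ish periodic sequence built from the cycle. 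If that fails, we would state (ii) with the accessibility conclusion guarded by an explicit classical hypothesis (Markov-style or decidability of $\SN_R$), as Section~\ref{sec:Well-foundedness} anticipates.

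For (iii), first obtain $\gCR$ from $\gWCR \land \gWN \land \gRP$: the Terese argument proves $\gUN$ by induction along the increasing structure. Then apply Theorem 1.2.2(i) to get $\gUN$, item (i) to get $\gBP$, and item (ii) to get $\gSN$.

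For (iv), the backward direction is constructive. Given $c \bmstep a \mstep b$, cofinality gives indices $i, j$ with $b \mstep s(i)$ and $c \mstep s(j)$, and $s(\max(i,j))$ is a common reduct. The forward direction assumes an enumeration $e : \nat \to A$. Define $s(0) = a$, and let $s(k+1)$ be a common reduct, obtained via $\gCR$, of $s(k)$ and $e(k)$ when $a \mstep e(k)$, and $s(k)$ otherwise. This case split needs decidability of $a \mstep e(k)$, an explicit hypothesis we would add. Finally, $R^r$-increasingness is weakened to $\mstep$ steps, which we would need to unfold into single $R^r$ steps by interleaving the finite reduction paths.
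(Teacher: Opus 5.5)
\textbf{Items (i) and (iv).} These are fine. Your (i) is the standard argument; the paper even runs it with $\gUNto$. Your backward direction of (iv) is exactly what the paper formalizes. Your forward direction, with its explicit decidability hypothesis, goes beyond what the paper proves.

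\textbf{Item (ii).} This is where the genuine gap lies. Once you replace $\gInc$ by $\gRP$ but keep $\gBP$, the statement is false, even classically. The two-cycle $a \rstep b \rstep a$ (Counterexample~\ref{CE:8}) satisfies $\gBP$, since every increasing sequence is bounded by $a$. It satisfies $\gRP$, since both elements are minimal forms, so index $0$ always works. Yet it has an infinite reduction. In fact $\gBP \land \gRP$ is equivalent to $\gSMseq$ (Proposition~\ref{prop:SMRP}), and strong minimalization does not give termination. This undermines each step of your plan:
\begin{itemize}
\item $\gBP \land \gRP$ does not refute infinite reductions, so there is no constructive refutation to be had.
\item Acyclicity cannot be extracted from $\gRP$ applied to a periodic sequence, because such a sequence satisfies $\gRP$.
\item In your classical sketch, the cycle $s(i) \mstep s(i+1) \mstep b \mstep s(i)$ is contradictory only under $\gInc$.
\item No added classical hypothesis (Markov-style, or decidability of $\SN_R$, which actually holds in this example) can rescue an implication that already fails classically.
\end{itemize}
The missing idea is that the bound must be a \emph{normal form}. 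The paper strengthens $\gBP$ to $\gWN \land \gUN$. By (i), every increasing $s$ is then bounded by some $n \in \NF_R$. Then $\gRPm$ gives an $i$ with $n \mstep s(i)$, which forces $s(i) \equiv n$ and contradicts $s(i) \rstep s(i+1)$. Keeping $\gInc$ would also work, via the size function. Even so, the conclusion obtained is only the sequential form \texttt{isWFseq-}, not accessibility.

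\textbf{Item (iii).} The failure propagates here. The step ``obtain $\gCR$ from $\gWCR \land \gWN \land \gRP$ by the Terese induction along the increasing structure'' has no content once $\gInc$ is gone. That induction needs the size function as its measure. $\gRP$ supplies no measure at all, only a constraint on bounded infinite sequences, which you never construct. Your final appeal to (ii) also inherits the failure above. Even the corrected (ii) would only deliver the sequential form, not $\gSN$ as accessibility. The paper's version of (iii) instead takes $\gWN$, $\gWCR$, $\gRPm$ together with an explicit hypothesis that $\SN_R$ is decidable. It concludes accessibility directly, using that hypothesis to construct the needed witnesses for membership in $\SN_R$.
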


\begin{proof}
    See \texttt{module Theorem-1-2-3} in \texttt{Theorems.agda},\\
    \verb|i     : R isWN → R isUN → R isBP|\\
    $\mathtt{i^\to}$\hspace{5mm} \verb|: R isWN → R |\texttt{isUN}$^{\to}$ \verb|→ R isBP|\\
    \verb|iiSeq : R isWN → R isUN → R isRP → isWFseq- (~R R)|\\
    \verb|iii   : R isWN → R isWCR → R isRP- → dec SN → R isSN|\\
    \verb|iv    : R isCP → R isCR|
\end{proof}
\begin{remark} \hfill
  \begin{description}
    \item[($i$)] In $\mathtt{i^\to}$ we again provide a slight improvement by using $\gUNto$ rather than $\gUN$.
    \item[($ii$)] We replace $\gBP$ with the stronger properties $\gWN$ and $\gUN$ and we replace $\gInc$ with the weaker property $\gRP$. With the original property $\gBP$ and $\gRP$ we can imply the related property $\gSM$ ($\gSM$ is defined in Section~\ref{sec:Hierarchy} and its relation to $\gSN$ is expanded upon). The definition and further discussion of $\isWFseqm$ is provided in Section~\ref{sec:Well-foundedness}.
    \item[($iii$)] In $\mathtt{iii}$, we replace $\gInc$ with the more general property $\gRPm$. We also assume that $\gSN$ is decidable, as this allows the construction of a witness for whether an element possesses the property $\SN$, which is essential for this proof.
  \end{description}
\end{remark}
 
\section{Hierarchy}
\label{sec:Hierarchy}

\newenvironment{counterexample}[1][]{%
    \refstepcounter{CEcounter} 
    \noindent \scriptsize\textbf{{\theCEcounter } }  #1\par
}

The developments of this section are formalized in
\texttt{Properties.agda} and
\texttt{Implications.agda}. This section examines the hierarchy of ARS properties.
In programming language theory, $\gSN$ and $\gCR$ constitute the essential properties that
jointly establish completeness.
Consequently, our analysis centers on two primary categories of properties: those associated
with termination and those associated with confluence.

As discussed in Subsection \ref{subsec:def} we put forward $\RP$ as an alternative for $\Inc$. It is through examining $\RP$ that
we came to the `minimal' family of properties below. These properties help to augment our ARS hierarchy.

\begin{definition}\label{def:mf} Let $a \in A$. \hfill
    \begin{description}
        \item[\bq \ule{$a \in \MF_R$}] $a$ is a \bemph{minimal form} if $a \mstep b \implies b \mstep a$ for any $b \in A$.
        \item[\bq \ule{$a \in \WM_R$}] $a$ is \bemph{weakly minimalizing} if $a \mstep b$ for some $b \in \MF_R$.
        \item[\bq \ule{$a \in \SM_R$}]  $a$ is \bemph{strongly minimalizing} if either $a \in MF_R$, or every element one $R$-step from $a$ is strongly minimalizing. (This
        definition is to be understood inductively.)
        \item[\bq \ule{$a \in \SMseq_R$}] $a$ is \bemph{sequentially strongly minimalizing} if for every $R$-increasing sequence $s$ starting at $a$, there exists
        an $i \in \nat$ such that $s (i) \in \MF$.
        \item[\bq \ule{$a \in \MP_R$}] $a$ has the \bemph{minimal form property} if $c \bmstep a \mstep b \implies c \mstep b$ for any $b \in \MF$.
    \end{description}
\end{definition}

\begin{proposition}\label{prop:nftomf}
    $\forall x.\, x\in \NF \implies x \in \MF$
\end{proposition}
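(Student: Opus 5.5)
The plan is to prove the statement directly from the definitions: fix $x \in \NF_R$ and $b \in A$ with $x \mstep b$, and show $b \mstep x$. The key lemma is that a normal form has no nontrivial reducts, i.e. $x \in \NF_R$ and $x \mstep b$ imply $x \equiv b$. Once we have $x \equiv b$, we transport along this identity (pattern matching on \texttt{refl} in Agda, which needs no axiom K) and conclude $b \mstep x$ by reflexivity of $R^*$.

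I would prove the lemma by induction on the derivation of $x \mstep b$. In the paper $R^* = (R^r)\tclos{}$, so I would case on the transitive closure of the reflexive closure.
\begin{itemize}
  \item \emph{Single $R^r$-step.} If the step is a reflexive one, then $x \equiv b$ holds directly. If it is a genuine step $x \rstep b$, this contradicts $x \in \NF_R$, and we conclude by ex falso.
  \item \emph{Composite step} $x \mathrel{(R^r)\tclos{}} y \mathrel{(R^r)\tclos{}} b$. The induction hypothesis on the first part gives $x \equiv y$. Rewriting along it, $y$ is again a normal form, so the induction hypothesis on the second part gives $y \equiv b$. Composing the two identities yields $x \equiv b$.
\end{itemize}
If the Agda closure is instead defined as a cons-list of steps, the argument is simpler. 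The empty list gives \texttt{refl}, and a nonempty list begins with some step $x \rstep y$, which is impossible.

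I expect no real obstacle here. The only point needing care is to state the induction so that normality is carried along after rewriting. Alternatively, one can avoid identity types entirely: prove by the same induction that $x \in \NF_R$ and $x \mstep b$ imply $b \mstep x$, where in the composite case we obtain $y \mstep x$ and $b \mstep y$ and compose them by transitivity. This variant sidesteps transport altogether.
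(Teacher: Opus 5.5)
Your main argument is correct and is essentially the paper's proof. The paper only cites the Agda term \texttt{NF} $\subseteq$ \texttt{MF}, which is a direct unfolding of the definitions: any reduction out of a normal form must be empty, so the reduct is $x$ itself, and reflexivity of $\mstep$ gives $b \mstep x$. Your cons-list version is exactly this, and your $(R^r)^{+}$ induction is a longer rendering of the same idea.

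There is one gap, in your closing identity-free variant. In its composite case, the induction hypothesis for $y \mathrel{(R^r)^{+}} b$ yields $b \mstep y$ only under the premise $y \in \NF_R$. Your invariant supplies only $y \mstep x$, and that does not make $y$ a normal form. To repair it, strengthen the invariant to ``$b \in \NF_R$ and $b \mstep x$''; the reflexive base case then gets $b \in \NF_R$ by unifying $b$ with $x$. Alternatively, simply keep your main lemma $x \equiv b$.
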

\begin{proof}
    See, \verb|NF ⊆ MF : ∀x → NFx → MFx| in \texttt{Implications.agda}.
\end{proof}

$\SMseq$ is only classically equivalent to $\SM$. The
following proposition relates these new definitions with our previous $\RP$ definitions.
\begin{proposition}\label{prop:SMRP}
    $\gSMseq \iff \gRP \land \gBP$
\end{proposition}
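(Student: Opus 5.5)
We prove the two implications separately, working directly from the definitions of $\SMseq$, $\gRP$ and $\gBP$ in Definitions~\ref{def:rp} and~\ref{def:mf}. Both directions hinge on one observation: the condition required of $s(i)$ in $\gRP$ says exactly that $s(i) \in \MF_R$. So $\gRP$ reads ``every bounded increasing sequence hits a minimal form.''

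\textbf{($\Leftarrow$).} Assume $\gRP$ and $\gBP$, and let $s$ be an $R$-increasing sequence starting at $a$. By $\gBP$, $s$ is bounded. By $\gRP$ there is an $i$ with $s(i)\mstep x \implies x \mstep s(i)$ for all $x$, i.e.\ $s(i)\in\MF_R$. Hence $a \in \SMseq_R$. This direction is purely constructive and essentially a repackaging.

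\textbf{($\Rightarrow$), the $\gRP$ half.} Take a bounded increasing $s$ and apply $\SMseq$ at $s(0)$ to obtain $i$ with $s(i)\in\MF_R$; this is precisely the conclusion of $\gRP$. Boundedness is not even needed here.

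\textbf{($\Rightarrow$), the $\gBP$ half.} Given an increasing $s$, apply $\SMseq$ at $s(0)$ to get $i$ with $s(i) \in \MF_R$. I claim $s(i)$ itself is an $s$-bound. For $j \le i$, we have $s(j) \mstep s(i)$ by composing the steps $s(j)\rstep s(j+1) \rstep \cdots \rstep s(i)$, proved by induction on $i-j$. For $j > i$, we have $s(i) \mstep s(j)$ in the same way, and minimality of $s(i)$ gives $s(j) \mstep s(i)$. So $s(i)$ bounds $s$.

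The main obstacle is the case split $j \le i$ versus $j>i$. Constructively this requires decidability of $\le$ on $\nat$, which is available, together with an auxiliary lemma that an increasing sequence satisfies $s(j)\mstep s(k)$ for $j\le k$. I would prove that lemma by induction on the witness $k = j + d$, using the reflexive and transitive constructors of $R^*$. Once it is in place, the rest is bookkeeping.
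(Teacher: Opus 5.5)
Your proof is correct and takes the same route as the paper, which splits the equivalence into the same three parts ($\gRP \land \gBP \implies \gSMseq$, $\gSMseq \implies \gRP$, and $\gSMseq \implies \gBP$), each a separate lemma in \texttt{Implications.agda}. Your treatment of the $\gBP$ half is constructively sound: you take the minimal form $s(i)$ itself as the bound and split on $j \le i$ versus $j > i$, which is legitimate because order on $\nat$ is decidable.
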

\begin{proof}
    See \texttt{Implications.agda},\\
    \verb|RP∧BP→SMseq    : R isRP → R isBP → ∀ {x : A} → SMseq R x| \\
    \verb|RisSMseq→RisRP : (∀ {x : A} → SMseq R x) → R isRP|\\
    \verb|RisSMseq→RisBP : (∀ {x : A} → SMseq R x) → R isBP| 
\end{proof}

Proposition \ref{prop:nftomf} suggests that we consider $\MF$ as a termination property in its own right. This leads us to
explore the following broader taxonomy of concepts relating to completeness.

\begin{center}
    \begin{tikzpicture}[auto,
      arrowstyle/.style={->, line width=1pt, >={Latex[length=3mm, width=2mm]}, shorten >=2pt}]
      \tikzstyle{boxnode} = [draw, rectangle, text centered, minimum width=1.2cm, inner sep=3pt]

      \node[text width=4cm, align=center, font=\bfseries] at (-1.5,1) {Confluence Hierarchy};
      \node[text width=4cm, align=center, font=\bfseries] at (5,1) {Termination Hierarchy};

      \node (Confluent) [boxnode] at (-4,0) {$\CR$};
      \node (RP)        [boxnode, right=.5cm of Confluent] {$\MP$};
      \node (WN)        [boxnode, right=.5cm of RP] {$\NP$};
      \node (UN)        [boxnode, right=.5cm of WN] {$\UNto$};

      \draw[arrowstyle] (Confluent) -- (RP);
      \draw[arrowstyle] (RP) -- (WN);
      \draw[arrowstyle] (WN) -- (UN);

      \node (NF) [boxnode] at (3,0) {$\NF$};
      \node (MF) [boxnode] at (3,-1) {$\MF$};
      \node (SN) [boxnode] at (5,0) {$\SN$};
      \node (SR) [boxnode] at (5,-1) {$\SM$};
      \node (WN) [boxnode] at (7,0) {$\WN$};
      \node (WR) [boxnode] at (7,-1) {$\WM$};

      \draw [arrowstyle] (NF) -- (SN);
      \draw [arrowstyle] (NF) -- (MF);
      \draw [arrowstyle] (MF) -- (SR);
      \draw [arrowstyle] (SN) -- (WN);
      \draw [arrowstyle] (SN) -- (SR);
      \draw [arrowstyle] (WN) -- (WR);
      \draw [arrowstyle] (SR) -- (WR);

      \draw[dashed] (2.0,-1.5) -- (2.0,1.5);
    \end{tikzpicture}
\end{center}

The implications in this diagram are formalized in \texttt{Implications.agda}.
The confluence hierarchy requires no classical assumptions, however this is not the case for the hierarchy of terminating properties.
The implication from $\SN$ to $\WN$ relies on the classical decidability of the property
of being $R$-minimal, hence its formalization takes the form

\verb|SNdec→WN : (~R R) isMinDec → SN ⊆ WN|

The necessity of assuming a classical axiom for this implication is discussed 
in Subsection \ref{sec:SNtoWN} and
Section \ref{sec:Well-foundedness}.
The same property is required to show that $\SM$ implies $\WM$.

\begin{proposition}\label{prop:MFtoNF} \hfill
    \begin{enumerate}
        \item $\forall x.\, x\in \MF \cap \WN \iff x \in \NF$
        \item $\forall x.\, x\in \MF \cap \SN \iff x \in \NF$
    \end{enumerate}

\end{proposition}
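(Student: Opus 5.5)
For both parts, the right-to-left direction is immediate. Proposition~\ref{prop:nftomf} gives $\NF \sse \MF$. A normal form is weakly normalizing via the empty reduction $x \mstep x$. It is also strongly normalizing, since the accessibility constructor applies vacuously when $x$ has no $R$-successors. So the work lies in the forward directions. In each we assume $x \in \MF$, take an arbitrary $b$ with $x \rstep b$, and derive a contradiction, which establishes $x \in \NF$. Since the goal is a negation, this reasoning is constructive.

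For (1), suppose $x \in \MF \cap \WN$, with a reduction $x \mstep n$ to some $n \in \NF_R$. Minimality applied to $x \mstep n$ gives $n \mstep x$. Given $x \rstep b$, we compose to obtain $n \mstep x \rstep b$, which is a nonempty reduction out of $n$. Any reduction $n \mstep c$ is either empty or begins with a genuine step $n \rstep c'$. Here it cannot be empty, because it ends with the step to $b$; so we decompose the star at its head to expose a first step $n \rstep c'$, contradicting $n \in \NF_R$. The only bookkeeping is to exhibit $R^+ \sse R^*$ composed with one step as a path whose head step leaves $n$. Given the closure-operator library, this is a routine lemma.

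For (2), suppose $x \in \MF \cap \SN$ and $x \rstep b$. Minimality gives $b \mstep x$, hence a cycle $x \rstep b \mstep x$, i.e.\ $x \mathrel{\tclos{R}} x$. The key step is that no element of $\SN_R$ lies on a $\tclos{R}$-cycle. I would prove this by accessibility induction on $\SN$, using a strengthened hypothesis: for every $y \in \SN_R$ and every $z$ with $z \mstep y$, we have $\lnot(y \rstep z)$. Suppose $y \rstep z$ and $z \mstep y$. If $z \equiv y$, we get $y \rstep y$, and the inductive hypothesis at $y$'s successor $y$ applied to the trivial path yields the contradiction. Otherwise write the path as $z \rstep z' \mstep y$. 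Then $y \rstep z$, so $z \in \SN$ by the inductive hypothesis, and we have $z \rstep z'$ with $z' \mstep y \rstep z$, i.e.\ $z' \mstep z$. The inductive hypothesis at $z$ then refutes this.

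The main obstacle is phrasing this induction so that the rotated cycle is handed back to a strict successor in exactly the form the hypothesis expects. This requires appending a step at the end of a $\mstep$ path, which is a standard lemma about $R^*$. Applying the resulting lemma with $y = x$ and $z = b$ (using $b \mstep x$) finishes part (2).
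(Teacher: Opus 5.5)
Your proof is correct in both parts. The paper gives no informal argument for this proposition; it only cites the Agda functions \texttt{MF}$\land$\texttt{WN}$\leftrightarrow$\texttt{NF} and \texttt{MF}$\land$\texttt{SN}$\leftrightarrow$\texttt{NF}. So there is no written route to compare against, and your part (1) is the evident argument.

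For part (2) you isolate a reusable lemma: no element of $\SN_R$ lies on a $\tclos{R}$-cycle. The price is a strengthened induction hypothesis and the rotation of the cycle. A shorter route reuses part (1).

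\begin{itemize}
\item First observe that $\MF$ is closed under $R$-steps: if $x \in \MF$, $x \rstep y$ and $y \mstep w$, then $w \mstep x \rstep y$.
\item Then prove $\MF \cap \SN \sse \NF$ directly by $\SN$-induction. Given $x \rstep y$, the induction hypothesis at $y$ yields $y \in \NF$, so $x \in \WN$.
\item Part (1) then gives $x \in \NF$, contradicting $x \rstep y$.
\end{itemize}

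Your version yields a standalone acyclicity fact. The alternative keeps the induction predicate equal to the statement itself. Both use only the same elementary facts about $\mstep$: transitivity and decomposition at the head.

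A minor slip: in your second case, $z \in \SN$ comes from the accessibility of $y$, while the induction hypothesis supplies the strengthened statement at $z$.
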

\begin{proof}
    See module \texttt{Normalizing-Implications} in \texttt{Implications.agda},\\
    \verb|MF∧WN↔NF : ∀ {x} → MF x × WN x ↔ NF x|\\
    \verb|MF∧SN↔NF : ∀ {x} → MF x × SN x ↔ NF x| \qedhere
\end{proof}

In contrast to the above proposition, the inclusion $\SN \subset \SM \cap \WN $ is strict: in Counterexample~\ref{CE:4}, $a \in \SM \cap \WN$, but $a \notin \SN$. In the tables below, we therefore include a separate column for the combination of $\SM$ and $\WN$.
\subsection{Conditions for completeness}\label{subsec:counterexamples}

We now examine which property combinations are sufficient to reverse these implications and achieve completeness. The following tables illustrate, for each property combination, whether confluence (left box) and strong normalization (right box) are obtained. Where these properties cannot be derived, we provide counterexamples demonstrating the necessity of additional conditions; these are labeled as CE-X.

The first table explores where the properties hold locally, the second explores where the properties hold globally. $\WCR$ holds for the counterexamples in the
first table, $\gWCR$ holds for the counterexamples in the second table $\gWCR$ (as a reminder, bold text is used to indicate a property holds globally).

\renewcommand*{\thefootnote}{\fnsymbol{footnote}} 
\begin{table}[h!]
    \centering
    \renewcommand\arraystretch{1.2}
    \begin{tabular}{!{\vrule width 1.5pt}
        >{\columncolor{gray!30}}l
        !{\vrule width 1.5pt}c|c
        !{\vrule width 1.5pt}c|c
        !{\vrule width 1.5pt}c|c
        !{\vrule width 1.5pt}c|c
        !{\vrule width 1.5pt}c|c!{\vrule width 1.5pt}}
        \Xhline{1.5pt}
        \rowcolor{gray!30}
        & \multicolumn{2}{c!{\vrule width 1.5pt}}{$\WM$}
        & \multicolumn{2}{c!{\vrule width 1.5pt}}{$\WN$}
        & \multicolumn{2}{c!{\vrule width 1.5pt}}{$\SM$}
        & \multicolumn{2}{c!{\vrule width 1.5pt}}{$\SMandWN$}
        & \multicolumn{2}{c!{\vrule width 1.5pt}}{$\SN$} \\
        \Xhline{1.5pt}
        $\UNto$ & CE-\ref{CE:4} & CE-\ref{CE:11} & CE-\ref{CE:4} & CE-\ref{CE:6} & $\CR$\footnotemark[1] & CE-\ref{CE:8} & $\CR$\footnotemark[1] & CE-\ref{CE:4} & $\CR$\footnotemark[1] & $\SN$ \\
        \hline
        $\NP$ & CE-\ref{CE:3} & CE-\ref{CE:8} & $\CR$ & CE-\ref{CE:6} & $\CR$\footnotemark[1] & CE-\ref{CE:8} & $\CR$ & $\SN$ & $\CR$\footnotemark[2] & $\SN$ \\
        \hline
        $\MP$ & $\CR$ & CE-\ref{CE:8} & $\CR$ & CE-\ref{CE:6} & $\CR$\footnotemark[2] & CE-\ref{CE:8} & $\CR$ & $\SN$ & $\CR$\footnotemark[2] & $\SN$ \\
        \hline
        $\CR$ & $\CR$ & CE-\ref{CE:8} & $\CR$ & CE-\ref{CE:6} & $\CR$ & CE-\ref{CE:8} & $\CR$ & $\SN$ & $\CR$ & $\SN$ \\
        \hline
        \Xhline{1.5pt}
    \end{tabular}
    \caption{Local implications}
    \label{tab:localImplications}
\end{table}

\vspace{-1cm}
\begin{table}[h!]
    \centering
    \renewcommand\arraystretch{1.2}
    \begin{tabular}{!{\vrule width 1.5pt}
        >{\columncolor{gray!30}}l
        !{\vrule width 1.5pt}c|c
        !{\vrule width 1.5pt}c|c
        !{\vrule width 1.5pt}c|c
        !{\vrule width 1.5pt}c|c
        !{\vrule width 1.5pt}c|c!{\vrule width 1.5pt}}
        \Xhline{1.5pt}
        \rowcolor{gray!30}
        & \multicolumn{2}{c!{\vrule width 1.5pt}}{$\gWM$}
        & \multicolumn{2}{c!{\vrule width 1.5pt}}{$\gWN$}
        & \multicolumn{2}{c!{\vrule width 1.5pt}}{$\gSM$}
        & \multicolumn{2}{c!{\vrule width 1.5pt}}{$\gSMandWN$}
        & \multicolumn{2}{c!{\vrule width 1.5pt}}{$\gSN$} \\
        \Xhline{1.5pt}
        $\gUNto$ & CE-\ref{CE:2} & CE-\ref{CE:8} & $\gCR$ & CE-\ref{CE:6} & $\gCR$\footnotemark[1] & CE-\ref{CE:8} & $\gCR$ & $\gSN$ & $\gCR$\footnotemark[2] & $\gSN$ \\
        \hline
        $\gNP$ & CE-\ref{CE:3} & CE-\ref{CE:8} & $\gCR$ & CE-\ref{CE:6} & $\gCR$\footnotemark[1] & CE-\ref{CE:8} & $\gCR$ & $\gSN$ & $\gCR$\footnotemark[2] & $\gSN$ \\
        \hline
        $\gMP$ & $\gCR$ & CE-\ref{CE:8} & $\gCR$ & CE-\ref{CE:6} & $\gCR$\footnotemark[2] & CE-\ref{CE:8} & $\gCR$ & $\gSN$ & $\gCR$\footnotemark[2] & $\gSN$ \\
        \hline
        $\gCR$ & $\gCR$ & CE-\ref{CE:8} & $\gCR$ & CE-\ref{CE:7} & $\gCR$ & CE-\ref{CE:8} & $\gCR$ & $\gSN$ & $\gCR$ & $\gSN$ \\
        \Xhline{1.5pt}
    \end{tabular}
    \caption{Global implications. \\$^*$~This implication also requires $\gWCR$. \\ $^\dagger$~This requires either $\gWCR$ or the classical implication $\SN \to \WN$ or $\SM \to \WM$.}
    \label{tab:globalImplications}

\end{table}
\renewcommand*{\thefootnote}{\arabic{footnote}}

The table should be interpreted as follows: for example, if an element possesses the property $\NP$, then, according to the table, obtaining the property $\CR$ requires, at a minimum, the additional property $\WN$. Counterexample~\ref{CE:3} demonstrates that the combination of $\WM$ and $\NP$ is insufficient to guarantee $\CR$.

\newcounter{CEcounter}
\begin{figure}[htbp]
    \centering
    \renewcommand{\theCEcounter}{\arabic{CEcounter}}
        \begin{tabular}{ccc}  
            \begin{minipage}{0.3\textwidth}
                \centering
                \begin{counterexample}\label{CE:1}
                    \begin{tikzcd}[row sep=small, column sep=small]
                        a & b \arrow[l] \arrow[r, bend left] & c \arrow[l, bend left] \arrow[r] & d
                    \end{tikzcd} \\
                \end{counterexample}
            \end{minipage}
            &
            \begin{minipage}{0.3\textwidth}
                \centering
                \begin{counterexample}\label{CE:2}
                    \begin{tikzcd}[row sep=small, column sep=small]
                        a & b \arrow[l] \arrow[r, bend left] & c \arrow[l, bend left] \arrow[r] & d \arrow[loop right]
                    \end{tikzcd} \\
                \end{counterexample}
            \end{minipage}
            &
            \begin{minipage}{0.3\textwidth}
                \centering
                \begin{counterexample}\label{CE:3}
                    \begin{tikzcd}[row sep=small, column sep=small]
                        a \arrow[loop left] & b \arrow[l] \arrow[r, bend left] & c \arrow[l, bend left] \arrow[r] & d \arrow[loop right]
                    \end{tikzcd} \\
                \end{counterexample}
            \end{minipage}
            \\  
            \\

            \begin{minipage}{0.3\textwidth}
                \centering
                \begin{counterexample}\label{CE:4}
                    \begin{tikzcd}[row sep=small, column sep=small]
                        & a \arrow[d] \arrow[dl] \\ e & b \arrow[l] \arrow[r] & c \arrow[r, bend left] & d \arrow[l, bend left]
                    \end{tikzcd} \\
                \end{counterexample}
            \end{minipage}
            &

            \begin{minipage}{0.3\textwidth}
                \centering
                \begin{counterexample}\label{CE:5}
                    \begin{tikzcd}[row sep=small, column sep=small]
                        n & c \arrow[l] \arrow[r] & a \arrow[r, bend left] & b \arrow[l, bend left]
                        & d \arrow[l] \arrow[r] & m
                    \end{tikzcd} \\
                \end{counterexample}
            \end{minipage}
        &
        \begin{minipage}{0.3\textwidth}
            \centering
            \begin{counterexample}\label{CE:6}
            \begin{tikzcd}[row sep=small, column sep=small]
                f_0 \arrow[r] \arrow[dr] & f_1 \arrow[r] \arrow[d] & f_2 \arrow[dl] \arrow[r] & \dots \arrow[dll] \\
                & n
            \end{tikzcd} \\
            \end{counterexample}
        \end{minipage}
        \\ 
        \\

        \begin{minipage}{0.3\textwidth}
            \centering
            \begin{counterexample}\label{CE:7}
            \begin{tikzcd}[row sep=small, column sep=small]
                f_0 \arrow[r] \arrow[d] & f_1 \arrow[r] \arrow[d] & f_2 \arrow[r] \arrow[d] & \dots \arrow[d] \\
                n_0  & n_1  & n_2 & \dots
            \end{tikzcd} \\
            \end{counterexample}
        \end{minipage}

        &
        \begin{minipage}{0.3\textwidth}
            \centering
            \begin{counterexample}\label{CE:8}
            \begin{tikzcd}[row sep=small, column sep=small]
                a \arrow[r, bend left] & b \arrow[l, bend left]
            \end{tikzcd} \\
        \end{counterexample}
    \end{minipage}
    &
    \begin{minipage}{0.3\textwidth}
        \centering
        \begin{counterexample}\label{CE:11}
            \begin{tikzcd}[row sep=small, column sep=small]
                c & a \arrow[r, bend left] \arrow[l] & b \arrow[l, bend left]
            \end{tikzcd} 
        \end{counterexample}
    \end{minipage}

\end{tabular}
\caption{Counterexamples}
\label{fig:counterexamples}
\end{figure}
\begin{remark} The following are the key takeaways from the implication tables. \hfill
    \begin{enumerate}
        \item The property $\gWCR$ can sometimes be a substitute for the classical
        property required for $\SN \implies \WN$ (and similarly $\SM \implies \WM$).
        \item $\gWN \land \gUNto \implies \gCR$ does not require any classical assumptions
         or $\gWCR$ for the implication to $\gCR$, unlike $\gSN \land \gUNto$.
        \item $\gWN \land \gUNto \implies \gCR$ only holds when the properties are global. Counterexample~\ref{CE:4} shows that $\WCR \land \WN \land \UNto \nRightarrow \CR$.
        \item Both $\SM$ and $\WN$ are required for $\SN$, see Subsection \ref{subsec:SMWNSN}.
        \item We have a generalization of Newman's Lemma as $\SM \land \gWCR \implies \CR$.
    \end{enumerate}
\end{remark}

\subsection{Generalized Newman's Lemma}\label{subsec:newnewman}
One interesting outcome from investigating the normalization taxonomy was a generalization of Newman's Lemma.

\begin{proposition}
    $\gWCR \land \SM \implies \CR$
\end{proposition}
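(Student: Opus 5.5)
The plan is to fix $\gWCR$ and prove the local statement $a \in \SM_R \implies a \in \CR_R$ by induction on the inductive derivation of $a \in \SM_R$ (Definition~\ref{def:mf}). This is the same structural induction that drives the constructive proof of Newman's Lemma, which uses the accessibility formulation of $\SN$. The difference is that $\SM$ has a second base case, namely $a \in \MF_R$. The induction motive must quantify over all peaks $c \bmstep a \mstep b$. Because the derivation of $\SM$ descends through reducts, $\WCR$ is needed at every element reached, which is why the hypothesis is the global $\gWCR$ rather than $\WCR$ at $a$ alone.

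\emph{Base case: $a \in \MF_R$.} Take $c \bmstep a \mstep b$. Since $a \mstep c$ and $a$ is a minimal form, we get $c \mstep a$. Composing gives $c \mstep a \mstep b$, so $d := b$ is a common reduct, because $b \mstep b$ by reflexivity. No use of $\WCR$ is needed here. This case is exactly what makes the statement more general than Newman's Lemma: loops sitting at minimal forms are harmless.

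\emph{Inductive case.} Here every $a'$ with $a \rstep a'$ is in $\SM_R$, and the induction hypothesis gives $a' \in \CR_R$. Take $c \bmstep a \mstep b$ and pattern-match on the two reductions. If either one is empty, the other endpoint is a common reduct. Otherwise write $a \rstep b_1 \mstep b$ and $a \rstep c_1 \mstep c$, and proceed in four steps.
\begin{enumerate}
\item Apply $\gWCR$ at $a$ to the peak $b_1 \bstep a \rstep c_1$, obtaining $d$ with $b_1 \mstep d \bmstep c_1$.
\item Apply the hypothesis $b_1 \in \CR_R$ to the peak $b \bmstep b_1 \mstep d$, obtaining $e$ with $b \mstep e$ and $d \mstep e$.
\item Apply the hypothesis $c_1 \in \CR_R$ to the peak $c \bmstep c_1 \mstep d \mstep e$, obtaining $f$ with $c \mstep f$ and $e \mstep f$.
\item Conclude that $b \mstep e \mstep f \bmstep c$, so $f$ is the required common reduct.
\end{enumerate}

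The expected obstacle is not mathematical but formal. The induction must be phrased so that the hypothesis is invoked only on one-step reducts $b_1$ and $c_1$, and never on the arbitrary elements $b$ or $c$. This forces the initial case split on whether each of the reductions $a \mstep b$ and $a \mstep c$ has length zero. The split must be done by pattern matching on the constructors of the reflexive--transitive closure, so that no decidability assumption creeps in. With that arranged, the proof is fully constructive and mirrors \verb|NewmansLemma|. Newman's Lemma itself is recovered as a corollary via the inclusion $\SN \subseteq \SM$ from the termination hierarchy.
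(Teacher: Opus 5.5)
Your proposal is correct and takes the same approach as the paper's \texttt{LocalNewmansLemmaRecurrent}. That proof follows the second, inductive proof of Newman's Lemma in Terese with $\SN$ replaced by $\SM$: induction on the derivation of $a \in \SM_R$, where the $\MF_R$ case is immediate and the step case is the usual diagram chase using $\gWCR$ at $a$ plus the induction hypotheses at the one-step reducts $b_1$ and $c_1$.
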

\begin{proof}
    See, \verb|LocalNewmansLemmaRecurrent : R isWCR → SM ⊆ CR|
\end{proof}

This proof follows the second proof of Newman's Lemma in \terese, but replaces
$\SN$ with the weaker assumption $\SM$. To illustrate the gain, consider the TRS with rules
$p(a) \to p(b)$, $p(b) \to p(a)$, $f(p(a),p(a)) \to k$, and $f(p(b),p(b)) \to k$.
Because $p(a) \leftrightarrow p(b)$ yields an infinite reduction, $\gSN$ fails. Nevertheless,
the system satisfies $\gWCR$ and $\gSM$, so the proposition gives $\gCR$ even without $\gSN$.

\subsection{Relation between $\SM$, $\WN$, and $\SN$} \label{subsec:SMWNSN}
One key takeaway from the tables is that we obtain strong termination and confluence when we have the properties $\SM$, $\WN$, and $\NP$. We also
obtain completeness when we have $\gWN$ and $\gSM$.
To show that $\gSM \land \gWN \implies \gSN$ we build on our proof that $\SM \land \WN \land \NP \implies \SN$. If $\gWN$ holds
then our proof no longer requires the property $\NP$. This progression of proofs can be seen in the functions:

\verb|WN∧NP∧SM→SN : ∀ {x} → WN x → NP x → SM x → SN x|

\verb|isWN∧SM→SN : R isWN → ∀ {x} → SM x → SN x|

\verb|isWN∧isSM→isSN : R isWN → R isSM → R isSN|

\subsection{A sufficient condition for $\gSN \to \gWN$}
\label{sec:SNtoWN}
It is perhaps surprising that the intuitively obvious implication from strong
normalization to weak normalization is actually not provable.
Indeed, the file \texttt{ARS/Examples.agda} shows that assuming this implication 
for all relations on the natural numbers derives excluded middle.
At the same time, this implication is valid for all relations which are
1) decidable, and 2) finitely branching.
This is formalized in the file \texttt{ARS/Implications.agda} as 

\verb|dec∧FB→SN⊆WN : R isDec → R isFB → SN ⊆ WN| 

Since these conditions are in fact satisfied by all of the standard
term rewrite systems and lambda calculi, we see that for the ARSs they induce, the implication
is valid.

\section{Well-foundedness}

\newcommand{\then}{\Longrightarrow}
\label{sec:Well-foundedness}



\newcommand{\gWF}{\mathbf{WF}}
\newcommand{\WF}{\mathrm{WF}}

An abstract reduction $\to_R$ is strongly normalizing
if and only if the converse relation is well-founded.
In this section, we will compare several definitions of well-foundedness.
For ease of exposition, we will only consider global versions of these notions.

\subsection{Definitions of well-foundedness}

For the rest of this section, fix $R \subseteq A \times A$.  We write $Rxy$ for $(x,y) \in R$. In the context of ARS, the reader should think of the converse relation,
interpreting $Rxy$ as denoting a reduction step $y \to_R x$. For $P \subseteq A$, let $\ol{P} = \{ x \in A \mid x \notin P\}$ denote the complement of $P$.
\begin{definition}\label{def:WFnotions} \hfill
   \begin{enumerate}
    \item $x \in A$ is \emph{accessible} if for all $y \in A$, $Ryx$ implies $y$ is accessible.
      This is an inductive definition, with the base case obtained at
     those $x$ satisfying $\lnot Ryx$ for all $y$.
    \item $P \subseteq A$ is \emph{inductive}
    if, for each $x \in A$, $(\forall y. Ryx \to y \in P)$ implies $x \in P$.

    \item $x \in A$ is \emph{$P$-minimal} if $x \in P$ and for all $y$,
    $Ryx$ implies $y \notin P$.

    \item $P \subseteq A$ is \emph{coreductive} if, for each $x \in \ol{P}$, there is a $y \in A$ such that $Ryx$ and $y \in \ol{P}$.

  \end{enumerate}
\end{definition}

The above notions give rise to several distinct definitions of well-foundedness, given below.

\begin{definition} \label{def:WFproperties} \hfill
  \begin{itemize}
    \item $R$ is \bemph{accessibly well-founded} if every element is accessible.
    \item $R$ is \bemph{inductively well-founded} if every inductive predicate holds.
    \item $R$ is \bemph{coreductively well-founded} if every coreductive predicate holds.
    \item $R$ is \bemph{well-founded minimality-wise} if nonempty subsets have minimal elements.
    \item $R$ is \bemph{well-founded minimality-wise for $\lnot \lnot$-closed predicates} if nonempty $\lnot \lnot$-closed subsets have minimal elements.
    \item $R$ is \bemph{sequentially well-founded} if every sequence contains an index at which it is not decreasing.
  \end{itemize}
      \begin{align*}
        &(\WFacc)  &&\forall x \in A.\; x \text{ is accessible} \\
        &(\WFind) &&\forall P \subseteq A.\;\text{$P$ is inductive} \Rightarrow \forall x \in A.\; x \in P\\
        &(\WFcor) &&\forall P \subseteq A.\;\text{$P$ is coreductive} \Rightarrow \forall x \in A.\; x \in P\\
        &(\WFmin)  &&\forall P \subseteq A.\; P \neq \varnothing \Rightarrow
        \exists x \in A.\; \text{$x$ is $P$-minimal}\\
        &(\WFminDNE) &&\forall P \subseteq A.\; P \neq \varnothing \Rightarrow \ol{\ol P} \subseteq P \Rightarrow
        \exists x \in A.\; \text{$x$ is $P$-minimal}\\
        &(\WFseq) &&\forall s : \nat \to A\ \exists k : \nat.\, \lnot \, R \,(s\,(k+1))\,(s\,k)
      \end{align*}
\end{definition}
We have formalized the above definitions of well-foundedness in \texttt{WFDefinitions.agda}.

Classically, the above notions of well-foundedness are all equivalent.  Constructively, only the first two are equivalent (and so we ignore $\WFind$ in what follows); the remaining definitions are not. Moreover, even individual classical definitions can be given constructive
interpretations in varying degrees of logical strength.
Formalizing the remaining implications enabled us to identify where exactly classical logic
is used, and how much of it is truly needed.

For example, the implication from ``every non-empty subset has a minimal element'' to
``every element is accessible''  is a classical proof by contradiction.
(Assuming that there exists an inaccessible element, the hypothesis implies that
there must be a minimal such element, which immediately yields a contradiction:
every minimal element is accessible.)

Constructively, this argument only goes so far as to show that, if every non-empty
subset has a minimal element, then no element is inaccessible.
That is, the universal statement being proved is that every $x \in A$
is not-not-accessible. The final step needed to infer that $x$ is indeed accessible
is not constructively provable in general, therefore it needs to be assumed as
an additional hypothesis in order to recover the full implication.

On the other hand, assuming accessibility is $\lnot\lnot$-closed, the other
hypothesis can now be weakened, to only require existence of minimal elements
for $\lnot\lnot$-closed subsets.  This is a significant restriction, which may
be validated in certain cases.  In contrast, the original hypothesis
(postulating minimal elements for ALL predicates)
is generally equivalent to every subset being decidable.
The file \texttt{WFCounters.agda} shows that assuming the full minimality hypothesis $\WFmin$ for the standard strict order on the natural numbers entails decidability of every subset of $\nat$ (which is clearly incompatible with effective semantics), while its restriction $\WFminDNE$ implies
that each subset $P$ is \emph{weakly} decidable,
satisfying $\lnot \lnot P(x) \lor \lnot P(X)$ for all $x$.

One could also consider proving well-foundedness ``up to $\lnot\lnot$'',
asserting that every element is $\lnot\lnot$-accessible,
every sequence $\lnot\lnot$-contains a non-decreasing index,
for every non-empty subset there $\lnot\lnot$-exists a minimal element, etc.
These weaker notions are formalized in the definition below, and in the corresponding file
\texttt{WFWeakDefinitions.agda}.

\begin{definition} \label{def:WFweakproperties} \hfill
  \begin{itemize}
    \item $R$ is \emph{weakly accessibly well-founded} if every element is $\lnot\lnot$-accessible: 
    \item $R$ is \emph{weakly coreductively well-founded} if every coreductive predicate holds for all elements up to double negation:
    \item  $R$ is \emph{weakly well-founded minimality-wise} if for every nonempty subset $P$ there $\lnot \lnot$-exists a minimal element in $P$:
    \item  $R$ is \emph{weakly well-founded minimality-wise for $\lnot \lnot$-closed predicates} if for every nonempty $\lnot \lnot$-closed subset $P$ there $\lnot \lnot$-exists a minimal element in $P$:
    \item $R$ is \emph{weakly sequentially well-founded} if no sequence is $R$-decreasing:
      \begin{align*}
       &(\WFaccm) &&\forall x \in A.\; \lnot \lnot  (x \text{ is accessible}) \\
       &(\WFcorm)  &&\forall P \subseteq A.\;\text{$P$ is coreductive} \Rightarrow \forall x \in A.\; \lnot\lnot(x \in P)\\
       &(\WFminm) &&\forall P \subseteq A.\; P \neq \varnothing \Rightarrow \lnot \lnot \;
        (\exists x \in A.\; \text{$x$ is $P$-minimal})\\
       &(\WFminDNEm) &&\forall P \subseteq A.\; P \neq \varnothing \Rightarrow \ol{\ol P} \subseteq P\Rightarrow \lnot \lnot \;
        (\exists x \in A.\; \text{$x$ is $P$-minimal})\\
       &(\WFseqm) &&\forall s:\nat \to A.\; \lnot \text{($s$ is $R$-decreasing)}
      \end{align*}
\end{itemize}
\end{definition}



\subsection{Implications between the definitions}

The logical relationships between all of the above definitions are summarized in Figure~\ref{fig:WF}.
All of the implications in the figure have been
formalized within the \texttt{WellFounded} subdirectory of the Agda repository.
An arrow from one node to another with no label indicates direct logical implication,
provable constructively with no additional assumptions. An arrow with a label indicates an additional
(semi-)classical hypothesis needed to establish the implication.
The definitions of these hypotheses are given in Figure \ref{tab:cprop}.

We consider $\WFacc$ to be the canonical constructive definition of
a well-founded relation (hence the emphasis in the figure).
In the ARS setting, this corresponds to defining
strong normalization inductively in terms of reductions to normal forms.
As can be seen from the figure, it is implied by
$\WFseqm$ (the definition used by \terese) only when \emph{two}
strongly classical conditions are in place.
These are $\accDNE$ (not-not-closure of strong normalization)
and $\accCor$ (which is related to the existence of an effective perpetual reduction strategy).

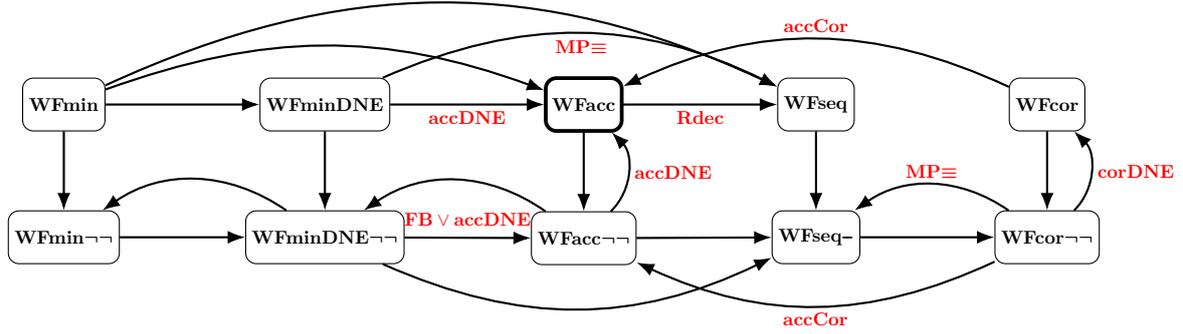
\begin{figure}[h]
\hskip-1cm
\begin{tikzpicture}[
  scale=0.7,
  every node/.style={transform shape},
  node distance=1cm and .5cm,
  box/.style={draw, rectangle, rounded corners, minimum width=1cm, minimum height=1cm, align=center},
  arrow/.style={-{Latex}, thick}
]

\node[box] (WFmin) {$\WFmin$};
\node[box, right=2.9cm of WFmin] (WFminDNE) {$\WFminDNE$};
\node[box, right=2.9cm of WFminDNE, line width = 0.5mm] (WFacc) {$\WFacc$};
\node[box, right=2.9cm of WFacc] (WFseq) {$\WFseq$};
\node[box, right=2.9cm of WFseq] (WFCor) {$\WFcor$};

\node[box, below=1.5cm of WFmin] (WFminM) {$\WFminm$};
\node[box, below=1.5cm of WFminDNE] (WFminDNEM) {$\WFminDNEm$};
\node[box, below=1.5cm of WFacc] (WFaccM) {$\WFaccm$};
\node[box, below=1.5cm of WFseq] (WFseqM) {$\WFseqm$};
\node[box, below=1.5cm of WFCor] (WFminCor) {$\WFcorm$};

\draw[arrow] (WFmin) -- (WFminDNE);
\draw[arrow, bend left=25] (WFmin) to (WFseq);

\draw[arrow] (WFminM) -- (WFminDNEM);
\draw[arrow] (WFaccM) -- (WFseqM);
\draw[arrow, bend right=35] (WFaccM) to (WFminDNEM);
\draw[arrow, bend right=35] (WFminDNEM) to (WFminM);
\draw[arrow, bend right=25] (WFminDNEM) to (WFseqM);
\draw[arrow] (WFseqM) -- (WFminCor);
\draw[arrow , bend left = 20] (WFmin) to (WFacc); 

\draw[arrow] (WFmin) -- (WFminM);
\draw[arrow] (WFminDNE) -- (WFminDNEM);
\draw[arrow] (WFacc) -- (WFaccM);
\draw[arrow] (WFseq) -- (WFseqM);
\draw[arrow] (WFCor) -- (WFminCor);

\draw[arrow] (WFminDNEM) to node[above=0.10cm, text=red] {$\FB \lor \accDNE$} (WFaccM);
\draw[arrow, bend right=35] (WFminCor) to node[above, text=red] {$\mpe$} (WFseqM);
\draw[arrow, bend right=25] (WFCor) to node[above, text=red] {$\accCor$} (WFacc);
\draw[arrow] (WFminDNE) to node[below, text=red]{$\accDNE $} (WFacc);
\draw[arrow, bend left = 25] (WFminDNE) to node[below, text=red]{$\mpe$} (WFseq);
\draw[arrow, bend right = 45] (WFminCor) to node[right, text=red]{$\corDNE$} (WFCor);
\draw[arrow, bend right = 45] (WFaccM) to node[right, text=red]{$\accDNE$} (WFacc);
\draw[arrow] (WFacc) to node[below, text=red]{$\rdec$} (WFseq);
\draw[arrow, bend left=25] (WFminCor) to node[below, text=red]{$\accCor$} (WFaccM);


\end{tikzpicture}
\caption{The logical relationships between notions of well-foundedness}
\label{fig:WF}
\end{figure}


{
\def\arraystretch{1.3}
\begin{figure}[h!]
\small
\begin{tabular}{@{}l l l @{}}
\toprule
\textbf{Property} & \textbf{Definition}  &\textbf{Agda Code}  \\
\midrule
$\rdec$   & $R$ is decidable
          & $\forall$ \verb|{x} {y}| $\to$ \verb|EM (R x y)|\\
$\FB$     & $R$ is finitely branching
          & $\mathtt{\forall (a : A) \to}\Sigma\mathtt{[ xs \in List A ] (\forall b \to R a b \to b \in{}List\, xs)}$ \\
$\mpe$    & Sequences are $\lnot\lnot$-closed
          & $\forall\ \mathtt{(f : \nat \to A) \to \lnot\lnot Closed
        (\lambda x \to} \Sigma \mathtt{[ k \in \nat ] (f\, k \equiv x))}$ \\
$\corDNE$ & Coreductives are $\lnot\lnot$-closed
          & $\mathtt{\forall (P : \mathcal{P} A) \to R}$-$\mathtt{coreductive\ P \to \lnot\lnot Closed\ P}$\\
$\accDNE$ & Accessibility is $\lnot\lnot$-closed
          & $\lnot\lnot$\verb|Closed (R -accessible)| \\
$\accCor$ & Accessibility is coreductive
          & \verb|R -coreductive (R -accessible)| \\

\bottomrule
\end{tabular}
\centering
\caption{Classical Properties Used in the Well-Foundedness Diagram}
\label{tab:cprop}
\end{figure}
}
Figure \ref{fig:WF} illustrates that the network of logical relationships between
constructive notions of well-foundedness is quite rich indeed.
Taking the $\lnot\lnot$-closures of
these notions cuts down some, but not all, of the complexity.
For example, the two minimality principles become equivalent.
At the same time, the implication from the weak version of the sequential definition 
to the (weak) accessible one still requires the hypothesis $\accCor$,
which may be described as the constructive contrapositive of the definition of accessibility.
In the context of rewriting systems,
this condition asserts the existence of a function which maps any
element that is \emph{not} strongly normalizing to a one-step reduct of it having the same property.
Assuming $\accCor$, all notions of well-foundedness
are constructively equivalent up to $\lnot\lnot$.

Other observations that can be gleaned from the figure include the following.
\begin{enumerate}
  \item The implication from $\WFacc$ to $\WFseq$ requires the underlying relation to be decidable.
    This assumption is no longer necessary on the $\lnot\lnot$-translated side.
  \item $\WFmin$ is a very strong assumption and implies $\WFseq$ unconditionally.
  The more reasonable $\WFminDNE$ requires the image of every sequence to be $\lnot\lnot$-closed
  to complete the implication.  This condition can be considered as a special case of Markov's Principle
  (if one assumes that equality on $A$ is decidable).
  \item On the $\lnot\lnot$-side, $\WFminm$ and $\WFminDNEm$ are equivalent. Both are directly implied by $\WFaccm$,
    and all three directly imply $\WFseqm$.
  \item All definitions imply $\WFcorm$.  This weakest definition implies $\WFaccm$ assuming the
    ``constructive contrapositive of accessibility'', $\accCor$.

    With this assumption, $\WFacc$ can also be derived from $\WFcor$. 


  \item If the relation is finitely branching ($\FB$) or $R$-accessibility is $\lnot\lnot$-closed ($\accDNE$), then $\WFminDNEm$
    implies $\WFaccm$ directly.

    (It may be noted that $\FB$ implies $\accCor$ whenever the relation $R$ is decidable
    and $R$-accessibility is \emph{weakly} decidable;  this is proved in \texttt{Coreductive.agda}.)


\end{enumerate}




\subsection{Well-foundedness and rewriting}

One may wonder whether the differences between the various definitions above matter for
rewriting theory.  In particular, how often are the properties listed in
Figure \ref{tab:cprop} actually valid for a given rewrite system, so that its termination
($\WFacc$) could be inferred from a weaker statement, such as $\WFminDNEm$?

On the one hand,~\cite{Berardi} proves a general result to the effect that
$\accDNE$ is valid \emph{metatheoretically} for a large class of definable relations.
Specifically, if $R$ is a primitive recursive predicate on the natural numbers and
$\WFaccm(R)$ can be proved in higher-order arithmetic, then an intuitionistic proof
of $\WFacc(R)$ in higher arithmetic must also exist.
So there is little hope of distinguishing between these notions via an explicit counterexample.

On the other hand, some of these properties could be discriminated
by showing that an implication between them yields a ``constructive taboo'':
a proposition known to be unprovable without additional classical assumptions.
For example, the archetype of a well-founded relation is the strict order \verb|<|
on the natural numbers.  The usual induction readily validates $\WFacc$ for \verb|<|.
  At the same time, the file \texttt{WFCounters.agda} shows that
asserting $\WFmin$ for \verb|<| implies that every subset of the natural numbers is decidable.
Since this is clearly not provable in Agda without classical axioms,
this argument shows that the implication from $\WFacc$ to $\WFmin$ is not provable either.

Such distinctions have consequences for rewriting theory.
For example, recall that $x\in \SN_R$ iff $x$ is $\tilde{R}$-accessible, where $\tilde{R}$
is the converse of $R$.  Thus, if $\tilde{R}$ is well-founded, $R \models \gSN$.
Yet actually computing the normal form of a given $x$ requires a
strong form of decidability of being $R$-minimal:
$\forall x. \left(\exists y. x \rstep y \right) \lor \left(\forall y. x \nrstep y\right)$
(Note that merely deciding whether $x$ is a normal form constitutes the weaker disjunction
$\lnot\lnot\left(\exists y. x \rstep y\right) \lor \left(\forall y. x \nrstep y\right)$.)

On the other hand, $\WFmin$ trivially implies that each $x \in A$ is weakly normalizing.
Indeed, given $x \in A$, consider the predicate $P_x(y) = x \mstep y$.
This predicate is non-empty, since the empty reduction yields a proof of $P_x(x)$.
Let $n$ be a $P_x$-minimal element.  Then $x \mstep n$.  Moreover, for any $y$,
If $n \rstep y$, then $\lnot(P_x(y))$.  But since reductions from $x$ are closed downward,
no such $y$ can exist, and hence $n$ is a normal form of $x$.
This indicates that the non-provability of the implication $\WFacc \to \WFmin$ extends
to strong and weak normalization.

%
%

%
%

\section{An application: properties of lambda terms} \label{sec:Applications}
To illustrate the general applicability of our development, we show how it binds to another component of our greater project, the lambda calculus.

Our formalization of the lambda calculus is based on the so-called \emph{nested types} encoding of lambda terms.
\cite{BirdPaterson} \cite{AltenkirchReus}.
(We learned of this encoding from H\aa{}kon Gylterud; it appears to be quite popular in the Haskell community.)
The key to this representation is to think of lambda terms not as a set (type), but as a functor (a type-indexed family): 
\newcommand{\Lam}{\Lambda}
\newcommand{\tSet}{\mathtt{Set}}
\newcommand{\mtt}[1]{\mathtt{#1}}
\begin{align*}
  \mtt{data}&\ \Lam\ (X : \tSet) : \tSet \ \ \mtt{where} \ \ \\ 
  \mtt{var} &: X → \Lam X \\
  \mtt{app} &: \Lam X → \Lam X → \Lam X \\
  \mtt{abs} &: \Lam (\uparrow\! X) → \Lam X
\end{align*}
The $\uparrow$ type constructor represents the lifting monad $\uparrow X = X + 1$. 
Its occurrence inside the argument to $\Lam$ in the $\mtt{abs}$ constructor renders the type non-homogeneous: the type of lambda terms over a set $A$ 
depends on the type of terms over other sets.

The advantages of this encoding include the following:
\begin{itemize}
  \item Lifting/dropping lemmas related to de Bruijn indices obtain natural categorical forms;
  \item The encoding suggests a general higher-order induction combinator that suffices to derive all of the standard results like the substitution lemma;
  \item The framework is powerful enough to do formalization of classical lambda calculus theory.
\end{itemize}
\newcommand{\mcH}{\mathcal{H}}
\newcommand{\mcHo}{\mathcal{H\omega}}
The current state of the library includes standard results about reduction (standardization and confluence), 
definitions of undecidable lambda theories based on identifying unsolvable terms ($\mcH$ and $\mcHo$), 
and a formalization of an unpublished result on characterizing 2-cycles, based on earlier collaboration with Jan Willem Klop and Joerg Endrullis.

As an example of how the ARS library helps this development, the following basic implications are obtained via the ARS theory:
\begin{description}
  \item[$\SN \to \WN$:] Agda can effectively compute the normal form of any SN term.
  \item[$\CR \to \UN$:] Distinct normal forms are not beta convertible (equational consistency).
  \item[$\mathtt{dec NF_\beta}$:] It is decidable whether a given term is a normal form.
  \item[Well-foundedness:] Most properties displayed in Figure \ref{fig:WF} are equivalent for lambda terms.
\end{description}
While all these facts are easy and could have been proved directly, that they can be obtained as consequences of the general theory 
illustrates the whole point: such facts shouldn't have to be proved at all, and now they don't have to be.

An additional aspect of this application worth of note is that the heterogeneous nature of the $\Lam$ type family in no way interfered with the classical, Terese-style 
formalization of ARS. For example, the proof of decidability of normal forms is derived from the fact that single-step beta reduction is 
decidable and finitely branching, concepts that are defined for a fixed domain $A$ --- even though a single lambda abstraction requires to consider 
reduction between terms over a different set, instantiating ARS theory in a new domain.
This shows that Agda's module system indeed interacts very well with its type system. 
It was a surprise to see a design feature for namespace management to be so useful in relating results with a high level of generality.

\section{Conclusion and Further Work}
\label{sec:Conclusion}

In conclusion, we have formalized the basic results of ARS theory as
presented in~\cite{Terese}.  Our investigations show that much of this
theory can be made effective for most languages one typically encounters
in programming language theory.
Throughout this effort, we found that the classical theorems often assume
  more than is necessary.
  Seeking to optimize hypotheses in every proof, we obtain marginal improvements of the standard
termination and confluence criteria.
We also identified several new ARS properties
and established their relationship with the standard notions.
This provided us with a
clearer structural understanding of how different properties interact
with one another, contributing to a more refined 
formulation of ARS theory.

We also analyzed the subtleties in formulating the concept of a well-founded relation
precisely in a constructive metatheory.  We identified several classical principles
that are needed to relate these different variations of the concept.

Our work lays the ground for further extensions.  It makes standard ARS results
ready for immediate use in formalizations of term rewriting systems and typed
lambda calculi.  A potential avenue for further work would be developing a tool for 
importing termination certificates from external provers into Agda. 
For example, implementing a translation from the format used by 
the Annual International Termination Competition~\cite{AITC}
would enable the use of state-of-the-art automated termination provers in future developments.


%

\newpage 

\bibliographystyle{plainurl}
\bibliography{references} 

\end{document}